\newcommand{\ZZ}{\mathbb{Z}}
\DeclareMathOperator{\ggn}{gn'}
\DeclareMathOperator{\gn}{gn}
\DeclareMathOperator{\cn}{cn}
\newcommand{\qedclaim}{\hfill $\diamond$ \medskip}
\title{Algorithms, hardness and graph products on a pursuit-evasion game}
\author{
   Eurinardo Costa \inst{1}\and
   Nicolas Martins \inst{2}\and
   Rudini Sampaio \inst{3}
}
\institute{
   Universidade Federal do Cear\'a, Campus Russas, Russas, Brazil.\\\email{eurinardo@ufc.br}
   \and
   Universidade Integr. Internacional Lusofonia Afrobrasileira, Unilab, Redenção, Brazil.\\\email{nicolasam@unilab.edu.br}
   \and
   Departamento de Computa\c c\~ao, Universidade Federal do Cear\'a, Fortaleza, Brazil.\\\email{rudini@dc.ufc.br}
}
\begin{document}
\mainmatter
\maketitle

\begin{abstract}
In the $(s,d)$-spy game over a graph, introduced by Cohen et al. in 2016, one spy and $k$ guards occupy vertices of a graph and, at each turn, each guard may move along one edge and the spy may move along at most $s$ edges. The guards win if, after a finite number of turns, they ensure that the spy always remains at distance at most $d$ from at least one guard. 
The guard number is the minimum number of guards such that the guards have a winning strategy. In this paper, we investigate the spy game variant in which the guards are placed first, before the spy. We obtain a polynomial time algorithm for every speed $s\geq 2$ and distance $d\geq 0$ when the number of guards is a constant, which leads to a fixed parameter tractable algorithm on the $P_4$-fewness of the graph. We also prove that the spy game is NP-hard even in bipartite graphs with bounded diameter, for every speed $s\geq 2$ and distance $d\geq 0$.
\end{abstract}

\keywords{Spy game on graphs, XP algorithm, NP-hardness.}

\section{Introduction}
\label{sec:intro}

Given a graph $G=(V,E)$ and $v \in V$, let $N(v)=\{w \mid vw \in E\}$ denote the set of neighbors of $v$ and let $N[v]=N(v) \cup \{v\}$ be the closed neighborhood of $v$. Let $P_n$ be the graph consisting of a path with $n$ vertices.

In the two-player $(s,d)$-spy game on a finite graph $G$, where $s\geq 1$, $d\geq 0$ and $k\geq 1$ are the spy speed, the surveillance distance and the number of guards, respectively, a player controls one spy and the other player controls $k$ guards, all of them occupying vertices of $G$. The spy and the guards may occupy the same vertex. It is a perfect information sequential game with no chance moves.
The game proceeds turn-by-turn: first the spy move with speed $s$ (along at most $s$ edges) and then each guard may move along at most one edge. 
The guards win if, after a finite number of turns, they ensure that the spy always remains at distance at most $d$ from at least one guard.

The spy game is closely related to the well known {\it Cops and robber} game \cite{BonatoN11,NowakowskiW83}. In this game, first $k$ cops occupy some vertices of the graph and then one robber occupies a vertex. Turn-by-turn, each player may move (the cops first and then the robber) along one edge. The cops win if one cop occupies the same vertex of the robber after a finite number of turns. The {\it cop-number} $\cn(G)$ of a graph $G$ is the minimum number of cops required to win in $G$~\cite{AF84}.

There are many generalizations of the Cops and robber game \cite{AlonM11,BonatoCP10,ChalopinCNV11,FominGL10,FominGP12}. For example, allowing a faster robber with speed $s\geq 2$. In this variant, the exact number of cops with speed one required to capture a robber with speed two is unknown even in 2-dimensional grids~\cite{BalisterBBN16,FominGKNS10}. In 2010, Bonato et al. \cite{BonatoCP10} introduced other variant of the cops and robber in which the game is over if a cop occupies a vertex at distance at most a given integer $d$ from the robber. This is equivalent to the spy game with speed $s=1$ in a variant in which the spy is placed after the guards. For speed $s\geq 2$, the equivalence is not true and the games are significantly different.

Other related well know game is the {\it eternal domination} game~\cite{GHH05,GoldwasserK08,KM09,Klostermeyer2011}. A set of $k$ {\it defenders} occupy some vertices of a graph $G$. At each turn, an {\it attacker} chooses a vertex $v\in V$ and the defenders may move along an edge in such a way that at least one defender is at distance at most a given integer $d$ from $v$. There are some variants of the eternal domination game allowing more defenders to move at each turn and to occupy the same vertex \cite{GoldwasserK08,KM09,Klostermeyer2011}, which is equivalent to the spy game when the spy speed is at least the diameter of the graph.

The spy game was introduced by Cohen et al. in 2016 \cite{cohen16} in two variants. In the main variant, the spy is placed first and we may consider that she wins if she reaches a vertex at distance more than the surveilling distance $d$ from every guard.
From the classical Zermelo-von Neumann theorem \cite{zermelo13}, one of the two players has a winning strategy, since it is a perfect information finite game without draw. Here we can consider the spy game as a finite game since the number of possible configurations of the spy and the guards is finite in a finite graph $G$. For example, we may consider that the guards win the game if the spy repeats a game configuration (after her move).

The \emph{guard number} is the minimum number of guards such that the guards have a winning strategy.
We denote by $\ggn_{s,d}(G)$ and $\gn_{s,d}(G)$ the guard number in the $(s,d)$-spy game variants in which the spy is placed first and the guards are placed first, respectively.
An example of the difference between these parameters is the graph $G$ with $n$ isolated vertices: $\gn_{s,d}(G)=n$ and $\ggn_{s,d}(G)=1$.
Cohen et al. \cite{cohen18} obtained some bounds in the following Proposition \ref{claim1}.
Notice that they are tight, since $\gn_{1,0}(G)=\cn(G)$ and $\ggn_{1,0}(G)=1$.
Moreover, graphs $G$ such as the path $P_n$ with cop number 1 satisfy $\gn_{s,d}(G)=\ggn_{s,d}(G)$.

\begin{proposition}[\cite{cohen18}]
Given a graph $G$ and $s\geq 1$ and $d\geq 0$,
$$\ggn_{s,d}(G)\leq\gn_{s,d}(G)\leq\ggn_{s,d}(G)+\cn(G)-1,$$ where $\cn(G)$ is the cop number of $G$.
\label{claim1}
\end{proposition}

In \cite{cohen18}, it was proved that the guard number is NP-hard in general graphs and a directed version is PSPACE-hard even in DAGs. The authors left open the question of the PSPACE-hardness on the undirected case and the question of the spy game on grids: ``\emph{Many open questions remain such as the characterization of the guard-number in other graph classes, e.g., in grids}'' \cite{cohen18}.
Moreover, in 2020, it was proved \cite{cohen18alg} that the guard number is computable in polynomial time for trees by using Linear Programming and a fractional relaxation of the game. The authors also obtained an upper bound for the fractional guard number on the square grid $G_{n\times n}$, a parameter different from the guard number, which was proved to be equal in trees.

In this paper, we investigate the spy game variant in which the guards are placed first and the parameter $\gn(G)$. Here we consider the decision problem of the spy game with instance $G$ and $k$ as the problem of deciding if $k$ guards have a winning strategy in $G$ when they are placed first.
We first obtain a polynomial time algorithm when the the number $k$ of guards is a constant, for any speed $s\geq 1$ and distance $d\geq 0$. We use this to obtain a fixed parameter tractable (fpt) algorithm for the spy game on the $P_4$-fewness of the graph, which is a relevant result, since games on graphs are expected to be PSPACE-hard in general graphs and therefore it is not expected to find polynomial algorithms for many classes of graphs.
We also obtain a tight upper bound on the strong product $G_1\boxtimes G_2$ of graphs $G_1$ and $G_2$. 
Finally, we prove that for every speed $s\geq 2$ and distance $d\geq 0$ the spy game decision problem is NP-hard even in bipartite graphs with diameter at most $4\cdot(d+1)$. This hardness result is a generalization of the NP-hardness reduction for general graphs in \cite{cohen18}.

\section{Polynomial algorithm for fixed number $k$ of guards}

We prove in this section that the spy game decision problem is solvable in polynomial time $O(n^{3k+2})$ for any speed $s\geq 1$ and distance $d\geq 0$, where $k$ is the number of guards, which leads to an XP algorithm parameterized by the number $k$ of guards. For this, we first define spy and guard configurations.

\begin{definition}
Given a graph $G$ and integers $s\geq 1$, $d\geq 0$, $k\geq 1$, let a \emph{configuration} in $G$ be a possible scenario of the spy game, with the spy and all $k$ guards occupying vertices. There are two types of configurations: \emph{spy configuration} (before the spy's move) and \emph{guard configuration} (before the guards' move). A spy configuration may be equal to a guard configuration (the only difference is that the spy is the next to move). Clearly there are exactly $2n^{k+1}$ configurations (2 possibilities for the next to move, $n$ possible vertices for the spy and all $k$ guards).
A spy configuration $C_1$ \emph{leads to} a guard configuration $C_2$ if $C_2$ can be obtained from $C_1$ by moving the spy along at most $s$ edges. A guard configuration $C_1$ \emph{leads to} a spy configuration $C_2$ if $C_2$ can be obtained from $C_1$ by moving the guards along at most one edge each.
Let the \emph{digraph $D^*$} defined as follows: for every spy or guard configuration $C$, create an associated vertex $C$ in $D^*$. If a configuration $C_1$ leads to a configuration $C_2$, we add in $D^*$ the directed edge from $C_1$ to $C_2$. Clearly, $D^*$ has exactly $2n^{k+1}$ vertices.
\end{definition}

In the next theorem, we show how to proceed in this constructed digraph $D^*$ in order to solve the decision problem of the spy game variant in which the guards are placed first.

\begin{theorem}\label{teo-XP}
Let $s\geq 1$ and $d\geq 0$.
Given a graph $G$ with $n$ vertices and $k\geq 1$, the spy game decision problem with instance $G$ and $k$ is solvable in XP time $O(n^{3k+2})$.
\end{theorem}

\begin{proof}
A \emph{controlled configuration} is a \emph{spy configuration} in which there is a guard at distance at least $d$ from the spy.
First, we have to determine the \emph{winning configurations}, which are the controlled configurations from which the guards can guarantee that all the next spy configurations are controlled. We call this Phase 1.

In Phase 1, first mark all controlled (spy) configurations as winning. Next, repeat the following procedure until no more vertices of $D^*$ are dismarked. Let $C_1$ be a marked (spy) configuration. If there is a guard configuration $C_2$ such that $C_1$ leads to $C_2$ but $C_2$ cannot lead to a marked configuration, then dismark $C_1$ as winning.

After this, we have the winning configurations (for the guards) from which the guards can always force controlled configurations. After Phase 1, the winning configurations are the only ones which are marked.

Next, we start Phase 2 to decide if the guards can force a winning configuration in the game. In Phase 2, no configuration is dismarked.
Repeat the following procedure until no more vertices of $D^*$ are marked.
For every guard configuration $C$, mark the vertex $C$ of $D^*$ if there exists at least one marked out-neighbor of $C$ (in words, there is a guards' move which leads to a winning configuration for the guards after some moves).
Moreover, for every spy configuration $C$, mark the vertex $C$ in $D^*$ if all out-neighbors of $C$ are marked (in words, any spy's move will lead to a winning configuration for the guards after some moves).

Finally, at the end, if there are (not necessarily distinct) vertices $u_1,\ldots,u_k$ of $G$ such that, for every guard configuration $C$ with the guard $i$ occupying $u_i$ ($1\leq i\leq k$), $C$ is marked in $D^*$, then the guards have a winning strategy (by occupying these vertices first). Otherwise, the spy has a winning strategy. This is because the guards can force a sequence of configurations which leads to a winning configuration, from which all the next configurations in the game are controlled.

By applying breadth-first search to each vertex of $G$ as a preprocessing to check distances, we can obtain all out-neighbors of a vertex $C$ in $D^*$ in time $O(n^3)$ and consequently $D^*$ can be constructed in time $O(n^{k+4})$.

In Phase 1, the algorithm has at most $n^{k+1}$ iterations, since at least one spy configuration is dismarked in each iteration, and every iteration takes time $O(n^{k+1}n^k)$, since each spy configuration leads to at most $n$ guard configurations and each guard configuration leads to at most $n^k$ spy configurations. Thus Phase 1 takes time $O(n^{3k+2})$.
In Phase 2, the algorithm has at most $2\cdot n^{k+1}$ iterations, since at least one vertex is marked in each iteration, and every iteration takes time $O(n^{k+1}n^k)$, since each spy configuration leads to at most $n$ guard configurations and each guard configuration leads to at most $n^k$ spy configurations. Thus Phase 2 also takes time $O(n^{3k+2})$.
\qed
\end{proof}

\section{An fpt-algorithm parameterized by the $P_4$-fewness of the graph}

A graph $G$ is a $(q,q-4)$-graph for some integer $q\geq 4$ if every subset of at most $q$ vertices induces at most $q-4$ distinct $P_4$'s.
For instance, cographs and $P_4$-sparse graphs are exactly the $(q,q-4)$-graphs for $q=4$ and $q=5$, respectively. The $P_4$-fewness $q(G)$ of a graph $G$ is the minimum $q\geq 4$ such that $G$ is a $(q,q-4)$-graph \cite{q-sula,q-claudia,q-yw,q-nicolas}. These graphs have received a lot of attention in the literature (under the expression ``graphs with few $P_4$'s''), since they are on the top of a very known hierarchy of graph classes, including cographs, $P_4$-reducible, $P_4$-sparse and $P_4$-lite graphs.
They also have a nice recursive decomposition based on unions, joins, spiders and small separable p-components~\cite{BO98}, described below.

Let $G_1=(V_1,E_1)$ and $G_2=(V_2,E_2)$ be two vertex disjoint graphs.  The
{\em disjoint union} of $G_1$ and $G_2$ is the graph $G_1\cup G_2=(V_1\cup
V_2,E_1\cup E_2)$. The {\em join} of $G_1$ and $G_2$ is the graph
$G_1+G_2=(V_1\cup V_2,E_1\cup E_2\cup \{uv:\ u\in V_1,\ v\in V_2\})$.
A \emph{spider} is a graph whose vertex set has a partition $(R,C,S)$, where
$C=\{c_1,\ldots,c_p\}$ and $S=\{s_1,\ldots,s_p\}$, for $p\geq 2$, are a clique
and a stable set, respectively; $s_i$ is adjacent to $c_j$ if and only if $i=j$
(a thin spider), or $s_i$ is adjacent to $c_j$ if and only if $i\not=j$ (a thick
spider); and every vertex of $R$ is adjacent to each vertex of $C$ and
non-adjacent to each vertex of $S$.

We say that a graph is {\it p-connected} (path connected) if, for every
nontrivial bipartition of its vertex set, there is a crossing $P_4$, i.e. an
induced $P_4$ with vertices in both parts. A {\it p-component} is a maximal
p-connected subgraph. We say that a p-connected graph $H$ is \emph{separable}
if its vertex set has a bipartition $(H_1,H_2)$ such that every crossing $P_4$
has its endpoints in $H_1$ and its midpoints in $H_2$.
Note that, if $G$ is a spider $(R,C,S)$, then $G[C\cup S]$ is a separable
p-component with bipartition $(C,S)$.

\begin{theorem}{\rm [Primeval Decomposition \cite{BO98}]}\label{teo-primeval}
If $G$ is a $(q,q-4)$-graph,  then one of the following holds:
\begin{itemize}
\item[(a)] $G$ is the disjoint union or the join of two $(q,q-4)$-graphs;
\item[(b)] $G$ is a spider $(R,C,S)$ and $G[R]$ is a $(q,q-4)$-graph; 
\item[(c)] $G$ contains a separable p-component $H$ with $|V(H)|<q$ and bipartition $(H_1,H_2)$, such that $G-H$ is a $(q,q-4)$-graph and every vertex of $G-H$ is adjacent to every vertex of $H_1$ and non-adjacent to every vertex of~$H_2$; or
\item[(d)] $G$ has at most $q$ vertices or $V(G)=\emptyset$.
\end{itemize}
\end{theorem}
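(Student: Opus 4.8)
Since the Primeval Decomposition is the classical result of Babel and Olariu \cite{BO98}, the plan is to recall how it is obtained: by combining a structural dichotomy valid for \emph{all} graphs with a $P_4$-counting argument specific to $(q,q-4)$-graphs. First I would invoke the Jamison--Olariu homogeneous decomposition: every graph $G$ with at least two vertices is disconnected, or co-disconnected (that is, $\overline{G}$ is disconnected), or p-connected, or else has a \emph{unique} p-component $H$ with $V(H)\subsetneq V(G)$; in this last case $H$ is separable, and its canonical bipartition $\{H_1,H_2\}$ has the property that every vertex of $V(G)\setminus V(H)$ is complete to one of $H_1,H_2$ and anticomplete to the other. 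This part is purely structural and does not use the $(q,q-4)$ hypothesis.

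Next I would reduce the theorem to this dichotomy. If $G$ is disconnected then $G=G_1\cup G_2$, and if $\overline{G}$ is disconnected then $G=G_1+G_2$; in either case $G_1$ and $G_2$ are induced subgraphs of $G$, hence again $(q,q-4)$-graphs, which is case~(a). If $|V(G)|\le q$ we are in case~(d). So I may assume $G$ and $\overline{G}$ are connected and $|V(G)|>q$, and one of the last two alternatives of the dichotomy holds. If $G$ is p-connected, the counting lemma stated below, applied to $H=G$ (which has more than $q$ vertices), forces $G$ to be a thin or thick spider body, that is the spider $(R,C,S)$ with $R=\emptyset$ and $G[R]$ the null graph, which is trivially a $(q,q-4)$-graph: case~(b). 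Otherwise $G$ has the unique proper p-component $H$. If $|V(H)|<q$, then $G-H=G[V(G)\setminus V(H)]$ is an induced subgraph of $G$, hence a $(q,q-4)$-graph, and we are in case~(c). If $|V(H)|\ge q$, the counting lemma makes $H$ a spider body; reading the side of its bipartition to which the vertices of $G-H$ are complete as the clique $C$ and the other side as the stable set $S$, the triple $(V(G)\setminus V(H),C,S)$ exhibits $G$ as a spider, and $G[R]=G-H$ is a $(q,q-4)$-graph: case~(b) again.

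The crux, and the step I expect to be the main obstacle, is the counting lemma: \emph{if a p-connected graph $H$ is a $(q,q-4)$-graph, then $|V(H)|<q$ or $H$ is a thin or thick spider body.} The plan for this is the contrapositive: given a p-connected $H$ that is not a spider body, exhibit, for every $q$ with $4\le q\le|V(H)|$, a set of $q$ vertices inducing at least $q-3$ copies of $P_4$, so that $H$ is not a $(q,q-4)$-graph for any such $q$ and can be one only when $q>|V(H)|$. The underlying phenomenon is that p-connectedness forces induced $P_4$'s to be abundant inside every large subset; the spider bodies are the only p-connected graphs that escape, because their induced $P_4$'s have the rigid form $s_i$-$c_i$-$c_j$-$s_j$ (respectively, $s_i$-$c_j$-$c_i$-$s_j$ in the thick case), so a $q$-vertex subset of a spider body induces at most $\binom{\lfloor q/2\rfloor}{2}$ copies of $P_4$, which already for $q=5$ is at most $q-4$. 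Proving this abundance for non-spider-bodies requires an induction on $|V(H)|$ that controls the limited ways a vertex can be added to a p-connected graph while keeping it p-connected, and that bounds the change in the $P_4$-count at each step; this is the technical heart of \cite{BO98}. Everything else is bookkeeping with the hereditary nature of the $(q,q-4)$ property (used for cases (a) and (c) and for $G[R]$ in case (b)) and with the description of separable p-components recalled just before the theorem.
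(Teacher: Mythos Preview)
The paper does not prove this theorem; it is stated with attribution to Babel and Olariu~\cite{BO98} and used as a black box to drive the FPT algorithm in Section~4. So there is no ``paper's own proof'' to compare your proposal against.

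That said, your outline is a faithful sketch of the classical argument: invoke the Jamison--Olariu structure theorem for arbitrary graphs (disconnected, co-disconnected, p-connected, or possessing a unique proper separable p-component), then apply the Babel--Olariu counting lemma to force any large p-connected $(q,q-4)$-graph to be a spider with empty head. The hereditary nature of the $(q,q-4)$ property handles the recursion into $G_1,G_2$, $G[R]$, and $G-H$. The only place where your write-up is genuinely a plan rather than a proof is the counting lemma itself, and you correctly identify it as the technical heart that lives in~\cite{BO98}; everything else is routine.

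One small caution: in the p-connected case you jump straight to ``spider body with $R=\emptyset$,'' which is right, but it is worth saying explicitly why a spider with $R\neq\emptyset$ is never p-connected (its p-component is $G[C\cup S]$, as the paper itself notes just before the theorem), so that the reader sees why case~(b) with nonempty $R$ arises only from the \emph{proper} separable p-component branch.
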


As a consequence, a $(q,q-4)$-graph $G$ can be decomposed by successively
applying Theorem~\ref{teo-primeval} as follows: If (a) holds, apply the
theorem to each component of $G$ or $\overline{G}$. If (b) holds, apply the
theorem to $G[R]$. Finally, if (c) holds, then apply the theorem to
$G-H$. This decomposition can be obtained in linear time~\cite{Baumann96,Jamison92}.
From this, the spy game variant in which the guards are placed first can be solved by composing the strategies of the parts obtained in each case (a)-(d).
We begin with unions and joins.

\begin{lemma}\label{lem-union-join}
Let $s\geq 1$ and $d\geq 0$. Then
$\gn_{s,d}(G_1\cup G_2)=\gn_{s,d}(G_1)+\gn_{s,d}(G_2)$.
Moreover,
\[
  \gn_{s,d}(G_1+G_2)\ =\ 
 \begin{cases}
    1, &\mbox{if $d\geq 1$ or $G_1$ and $G_2$ are complete},\\
    1, &\mbox{if $d=0$ and $s=1$ and $\min\{\cn(G_1),\cn(G_2)\}=1$},\\
    2, &\mbox{otherwise}.
 \end{cases}
\]
\end{lemma}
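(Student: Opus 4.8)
The plan is to treat the union and the join separately, and inside the join to split on the value of $d$.

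\emph{Union.} In $G_1\cup G_2$ there is no edge between $V(G_1)$ and $V(G_2)$, so once the spy chooses her initial vertex, say in $G_i$, she is confined forever to that component; the remaining game is exactly the $(s,d)$-spy game on $G_i$, because guards in the other part are at infinite distance and can never surveil her. Hence, with $k$ guards the guards win against a spy who started in $G_i$ if and only if $k\ge gn_{s,d}(G_i)$. Since the spy is placed first, she starts in whichever of $G_1,G_2$ has the larger guard number, giving the lower bound $gn_{s,d}(G_1\cup G_2)\ge\max\{gn_{s,d}(G_1),gn_{s,d}(G_2)\}$; conversely, with that many guards the guards simply observe which component the spy entered, place all their guards there according to the corresponding winning strategy, and win. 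This gives equality.

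\emph{Join, cases with value $1$.} The first key observation is that $G_1+G_2$ has diameter at most $2$: vertices in different parts are joined by an edge, and two vertices in the same part share any vertex of the other part as a common neighbour (I assume, as holds in the primeval decomposition, that $G_1$ and $G_2$ are both nonempty). If $d\ge 2$, a single guard placed anywhere surveils the whole graph, so $gn=1$. If $d=1$, one guard placed on the spy's vertex suffices: after the spy moves from $v$ to $u$, the set $N[v]\cap N[u]$ is nonempty -- it contains $v$ when $uv\in E$, and otherwise $u,v$ lie in the same part and any vertex of the other part belongs to both $N[u]$ and $N[v]$ -- so the guard can move from $v$ into $N[v]\cap N[u]\subseteq N[u]$, staying within distance $1$. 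If $d=0$ and both $G_1,G_2$ are complete, then $G_1+G_2$ is a complete graph and one guard can copy the spy's vertex every turn, so $gn=1$. Together with the trivial bound $gn\ge 1$, these settle the ``$=1$'' branch.

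\emph{Join, $d=0$, not both complete.} This is the only part needing real work. For the lower bound I give a spy strategy against one guard: pick nonadjacent $a,b\in V(G_1)$, let the spy start at $a$ (this forces the unique guard onto $a$ since $d=0$), and on the first turn move her to $b$, which is possible because $d_{G_1+G_2}(a,b)=2\le s$ via any vertex of $G_2$. The guard, at $a$, cannot reach $b$ in one step since $ab\notin E$, so after the guards' move the spy is at distance at least $1>0$ from the guard and wins; hence $gn\ge 2$. For the upper bound I describe a two-guard strategy maintaining the invariant: one guard occupies the spy's vertex, and the other occupies some vertex of the part not containing the spy. Initially place the first guard on the spy and the second anywhere in the other part. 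After a spy move from $v$ to $u$: if $u$ lies in the same part as $v$, the reserve guard (in the other part, hence joined to $u$) moves onto $u$ while the guard at $v$ retreats to the other part along a join edge; if $u$ lies in the part other than $v$'s, the roles swap. In both cases every move uses at most one (join) edge, one guard ends on $u$, and the other ends in the part not containing $u$, restoring the invariant; the guard on the spy is always at distance $0\le d$, so the guards win and $gn\le 2$. Combined with the lower bound, $gn_{s,d}(G_1+G_2)=2$.

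The genuinely delicate point is the last paragraph -- verifying that the two-guard invariant for the join with $d=0$ really is restorable after every possible spy move, in particular that both guards can always be repositioned using join edges. Everything else reduces to the diameter-$\le 2$ observation, the fact that the spy is placed before the guards (which drives both extremal arguments), and routine checks of closed-neighbourhood intersections. I would also remark that the two-guard strategy in fact works for an arbitrary join with $d=0$, so $gn\le 2$ there is unconditional; the completeness hypothesis is only what lets one push the bound down to $1$.
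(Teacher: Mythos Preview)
Your proof is correct and follows essentially the same approach as the paper. The only slip is in the $d=1$ join case: your guard starts on the spy's vertex $v$ and moves into $N[v]\cap N[u]$ after the spy reaches $u$, but this lands the guard in $N[u]$, not necessarily at $u$, so on the next turn the premise ``guard is at the spy's current vertex'' no longer holds and the argument does not iterate as written. The repair is immediate---maintain instead the invariant ``guard lies in $N[\text{spy}]$'' and observe that your own case analysis (either adjacent, or same part with every vertex of the other part a common neighbour) in fact shows $N[w]\cap N[u']\neq\emptyset$ for \emph{any} two vertices $w,u'$ of the join, so the invariant persists. The paper sidesteps this entirely with the simpler rule: keep the lone guard in whichever part does not contain the spy, which is always one join-edge away and hence at distance exactly~$1$.
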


\begin{proof}
In $G_1\cup G_2$, since the guards are placed first, they must guarantee that they win if the spy is placed in either $G_1$ or $G_2$. Thus $\gn_{s,d}(G_1\cup G_2)=\gn_{s,d}(G_1)+\gn_{s,d}(G_2)$.

In $G_1+G_2$, if $d\geq 1$, then one guard is sufficient to control the spy: if the spy is in $G_1$ (resp. $G_2$), the guard goes to $G_2$ (resp $G_1$). If $d=0$ and $G_1$ and $G_2$ are complete graphs, then $G_1+G_2$ is also a complete graph and consequently one guard is sufficient: just go to the same vertex of the spy in each turn.
If $d=0$ and $s=1$ and $\cn(G_1)=1$, then one guard can be placed in $G_1$ and control the spy, by following her in $G_1$ using his Cops and robber strategy or just occupying the same vertex in $G_2$ if the spy goes to $G_2$.

Now assume that $d=0$, $s\geq 2$ and w.l.g. $G_1$ is not complete. Let $v_1$ and $v_2$ be two non-adjacent vertices of $G_1$. Then, in the game on $G=G_1+G_2$, the spy can go from $v_1$ to $v_2$ in one step passing through a vertex of $G_2$ and one guard cannot keep up her (recall that $d=0$). Then two guards are necessary. Moreover, this is sufficient: just maintain in each round one guard in the same vertex of the spy and the second guard in the other graph: $G_1$ (resp. $G_2$) if the spy is in $G_2$ (resp. $G_1$).
\qed
\end{proof}

The next lemma determines the guard number on spiders.

\begin{lemma}\label{lem-spider}
Let $G$ be a spider $(R,C,S)$. If $d\geq 1$ or $s=1$, then $\gn_{s,d}(G)=1$.
Moreover, if $s\geq 2$ and $G$ is a thick spider, then $\gn_{s,0}(G)=3$ if $R\ne\emptyset$, and $\gn_{s,0}(G)=2$, otherwise.
Finally, if $s\geq 2$ and $G$ is a thin spider, then $\gn_{s,0}(G)=|C|+1$ if $R\ne\emptyset$, and $\gn_{s,0}(G)=|C|$, otherwise.
\end{lemma}

\begin{proof}
From the definition of spider, every vertex of $G$ has a neighbor in the clique $C$.
Thus, if $d\geq 1$ or $s=1$, one guard in $C$ is sufficient: just move to a vertex also in $C$ at distance 1 from the spy. From now on, consider surveillance distance $d=0$ and $s\geq 2$.

First suppose that $G$ is a thick spider. Since the spy can move in one step from a vertex of $C$ (or $R$ if $R\ne\emptyset$) to any vertex of $S$, then each vertex of $S$ must have a guard in its closed neighborhood before the spy's move. Thus, since every vertex of $C$ has a non-neighbor in $S$, 2 guards are necessary if $R=\emptyset$ and $3$ guards are necessary if $R\ne\emptyset$.
Moreover, if $R=\emptyset$, $2$ guards are sufficient: if the spy is in $S$, one guard goes to the same vertex of the spy and other guard to the only non-neighbor in $C$, since it dominates all other vertices of $S$; if the spy is in $C$, both guards go to $C$.
Finally, if $R\ne\emptyset$, $3$ guards are sufficient: just maintain in all rounds two guards in distinct vertices of $C$ and one guard in the same vertex of the spy. This is possible since any two distinct vertices of $C$ dominates all vertices of $G$.

Now suppose that $G$ is a thin spider. Since the spy can move in one step from a vertex of $C$ (or $R$ if $R\ne\emptyset$) to any vertex of $S$, then each vertex of $S$ must have a distinct guard in its closed neighborhood before the spy's move. Thus, $|C|=|S|$ guards are necessary if $R=\emptyset$ and $|C|+1$ guards are necessary if $R\ne\emptyset$.
Moreover, if $R=\emptyset$, $|C|$ guards are sufficient: just maintain one guard controlling each vertex of $C$ and its only neighbor in $S$. Finally, if $R\ne\emptyset$, $|C|+1$ guards are sufficient: just maintain in all rounds every vertex of $C$ with a guard and one guard in the same vertex of the spy. This is possible since all vertices of $G$ has a neighbor in $C$.
\qed
\end{proof}

The next lemma determines the guard number on separable p-components.

\begin{lemma}\label{lem-pconn}
Let $G$ be a graph with a separable p-component $H$ with $|V(H)|<q$ and bipartition $(H_1,H_2)$ of $H$ such that any vertex of $G-H$ is adjacent to every vertex of $H_1$ and non-adjacent to every vertex of $H_2$. Then, for any $s\geq 1$ and $d\geq 0$, $\gn_{s,d}(G)=\gn_{s,d}(G_R)\leq q$, where $G_R$ is the (reduced) graph obtained from $G$ by replacing $G-H$ by two adjacent or non-adjacent vertices, depending whether $G-H$ is complete or not, respectively.
\end{lemma}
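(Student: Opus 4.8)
The statement claims that the guard number is invariant under "reducing" the module $G-H$: replacing it by just two vertices (adjacent if $G-H$ is complete, non-adjacent otherwise) does not change $gn_{s,d}$. The plan is to show both inequalities $gn_{s,d}(G)\le gn_{s,d}(G_R)$ and $gn_{s,d}(G)\ge gn_{s,d}(G_R)$ by transferring winning strategies in both directions. The key structural fact to exploit is that $G-H$ is a module of $G$ (all its vertices have the same neighbourhood outside, namely $H_1$), and that inside $G-H$ what matters for the game is only whether two vertices chosen by the spy/guards can be adjacent or not — i.e. whether $G-H$ is complete.

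First I would record distance facts. In $G$, the distance between two vertices depends only on which "part" ($H_1$, $H_2$, or $G-H$) they lie in, plus — when both lie in $G-H$ — on whether they are adjacent in $G-H$. Concretely: any vertex of $G-H$ is at distance $1$ from every vertex of $H_1$ and at distance $2$ from every vertex of $H_2$ (via a vertex of $H_1$, using $|V(H)|<q$ and p-connectedness to guarantee $H_1,H_2\neq\emptyset$); two vertices of $G-H$ are at distance $1$ if $G-H$ is complete and at distance $2$ otherwise (via $H_1$); and distances within $H\cup H_1$-incident vertices are unchanged. The same facts hold verbatim in $G_R$ with the two replacement vertices playing the role of $G-H$. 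This means there is a distance-preserving correspondence: fix an arbitrary vertex $a\in G-H$, and if $G-H$ is not complete also fix a second vertex $b\in G-H$ non-adjacent to $a$ (which exists precisely because $G-H$ is not complete; if $G-H$ is a single vertex, the reduced graph has two non-adjacent copies and the argument degenerates but still works). Map the two reduced vertices to $a$ and (if present) $b$.

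For $gn_{s,d}(G)\le gn_{s,d}(G_R)$: take a winning guard strategy in $G_R$ with $k=gn_{s,d}(G_R)$ guards. Simulate the spy's moves in $G$ by a "shadow" spy in $G_R$: whenever the real spy occupies a vertex of $G-H$, the shadow spy occupies $a$ if the real spy's current vertex is adjacent to its previous vertex (or we're free to choose), and $b$ otherwise — more carefully, we track enough information so that the shadow spy's moves are legal (speed $s$) in $G_R$, which holds because every real-spy move of length $\le s$ corresponds to a shadow move of length $\le s$ by the distance facts above. The guards in $G_R$ respond; we lift their positions to $G$ via the same correspondence (a guard on a reduced vertex goes to $a$ or $b$; a guard elsewhere stays put), and guard moves of length $\le 1$ lift to guard moves of length $\le 1$. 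Since the shadow game keeps a guard within distance $d$ of the shadow spy, and distances are preserved, a guard stays within distance $d$ of the real spy in $G$. The reverse inequality $gn_{s,d}(G)\ge gn_{s,d}(G_R)$ is symmetric: a winning guard strategy in $G$ is pushed down to $G_R$, noting that the spy in $G_R$ has \emph{fewer} options inside the module (only $a,b$ instead of all of $G-H$), so anything the $G$-guards can handle, the $G_R$-guards can handle; and the spy in $G_R$ can only ever reach the two replacement vertices, each of which behaves exactly like $a$ or $b$ in $G$. Finally, $gn_{s,d}(G_R)\le q$ follows because $G_R$ has at most $q$ vertices ($|V(H)|<q$ plus two more — one should double-check the off-by-one, possibly invoking a trivial bound like $gn_{s,d}(G_R)\le |V(G_R)|$, or rather $\le |V(G_R)|$ since that many guards trivially dominate every vertex).

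The main obstacle will be making the spy-simulation bookkeeping precise in the direction $gn_{s,d}(G)\le gn_{s,d}(G_R)$: when the spy moves \emph{within} $G-H$ in $G$, the shadow spy may need to switch between $a$ and $b$, and one must check that such a switch is always a legal shadow move (distance $a$ to $b$ in $G_R$ is $2\le s$, so fine) and that it cannot accidentally let the shadow spy escape all $G_R$-guards while the real spy is still surveilled, nor vice versa — i.e. the invariant "some guard is within $d$ of the shadow spy $\iff$ some guard is within $d$ of the real spy" must be maintained. The clean way is to prove a lemma: for the specific vertex correspondence chosen, $\mathrm{dist}_G(u,v)=\mathrm{dist}_{G_R}(\phi(u),\phi(v))$ for the relevant pairs, and then both strategy transfers are immediate formal consequences; I would state and prove that distance lemma first and let it do all the work.
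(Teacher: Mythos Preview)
Your approach is essentially the same as the paper's: both directions are proved by simulating games, collapsing the module $G-H$ to two representative vertices and tracking whether consecutive spy positions inside $G-H$ are adjacent. The paper describes the move-by-move rules directly (in particular: if the spy moves inside $G-H$ to a \emph{neighbour}, the shadow stays; if to a \emph{non-neighbour}, the shadow switches $v_1\leftrightarrow v_2$), whereas you package the same idea as a distance-preservation lemma for the correspondence $v_1\mapsto a$, $v_2\mapsto b$ --- but the content is identical.

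Two small remarks. First, your claim that every $G-H$ vertex is at distance exactly~$2$ from every $H_2$ vertex need not hold (an $H_2$ vertex might have no $H_1$ neighbour), but this does not matter: what you actually need, and what holds, is that such distances are the \emph{same} in $G$ and $G_R$, since any shortest path must enter $H_1$ identically in both graphs. Second, your worry about the bound $gn_{s,d}(G_R)\le q$ is justified: $|V(G_R)|$ can be $q+1$, so the trivial ``one guard per vertex'' bound is not enough. One needs a short extra argument (place guards on all of $H$ and on one of $v_1,v_2$; when the spy switches to the uncovered one, rotate a guard through $H_1$). The paper's proof, like yours, leaves this step implicit.
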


\begin{proof}
Let $v_1$ and $v_2$ be the vertices which replaced $G-H$ in order to build $G_R$.

For any game on $G$, we can simulate it on $G_R$. For this, if a player (spy or guard) goes to $H$ in the game on $G$, it goes to the same vertex of $H$ in the game on $G_R$; if a player goes to $G-H$ from a vertex not in $G-H$ in the game on $G$, it goes to $v_1$ or $v_2$ in the game on $G_R$ (preferring the same vertex of the spy, if she is in $\{v_1,v_2\}$ in the game on $G_R$); if a player moves from a vertex of $G-H$ to a neighbor in $G-H$ in the game on $G$, it maintain the position in the game on $G_R$; if the spy moves from a vertex of $G-H$ to a non-adjacent vertex of $G-H$ in the game on $G$, she goes from $v_1$ to $v_2$ or from $v_2$ to $v_1$ in the game on $G_R$.

On the other hand, for any game on $G_R$, we can simulate it on $G$. For this, if $G-H$ is not complete, let $u_1$ and $u_2$ be two non-adjacent vertices of $G-H$; otherwise, let $u_1$ and $u_2$ be vertices of $G-H$ ($u_1$ and $u_2$ may be the same vertex if $G-H$ has only one vertex). If a player (spy or guard) goes to $H$ in the game on $G_R$, it goes to the same vertex of $H$ in the game on $G$; if a player goes to $v_1$ (resp. $v_2$) from a vertex not in $\{v_1,v_2\}$ in the game on $G_R$, it goes to $u_1$ (resp. $u_2$) in the game on $G$; if a player moves from $v_1$ to $v_2$ (resp. $v_2$ to $v_1$) in the game on $G_R$, it goes from $u_1$ to $u_2$ (resp. $u_2$ to $u_1$) in the game on $G$.

With this, if the guards win in $G$, they also win in $G_R$. Moreover, if the spy wins in $G$, it is possible to adapt the spy's strategy described above in $G_R$ to guarantee a win in $G_R$. Therefore, there is a winning strategy in $G$ for any winning strategy in $G_R$, and vice-versa.
\qed
\end{proof}

\begin{theorem}
The spy game decision problem has an fpt-algorithm parameterized by the $P_4$-fewness $q(G)$ of the graph $G$, with time $O(m+q^{3q+3}\cdot n)$, where $q=q(G)$ and $m$ and $n$ are the number of edges and vertices of $G$, respectively.
\end{theorem}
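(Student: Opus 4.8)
The plan is to combine the primeval decomposition (Theorem~\ref{teo-primeval}) with the XP-algorithm (Theorem~\ref{teo-XP}) and the three structural lemmas just proved. First I would recall that the primeval decomposition tree of $G$ can be computed in linear time $O(m+n)$~\cite{Baumann96,Jamison92}; each node of this tree corresponds to one of the cases (a)--(d) of Theorem~\ref{teo-primeval}. The algorithm then processes this tree bottom-up, computing at each node the guard number $gn_{s,d}$ of the corresponding induced subgraph. For a leaf (case (d)), the subgraph has at most $q$ vertices, so I invoke Theorem~\ref{teo-XP} directly, which costs $O(q^{3q+2})$ time. For an internal node of type (a) (union or join), I use Lemma~\ref{lem-union-join}: the guard number is $\max\{gn_{s,d}(G_1),gn_{s,d}(G_2)\}$ for a union, and is $1$ or $2$ (determined in $O(1)$ time, checking $d\geq 1$ or completeness) for a join. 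For a node of type (b) (spider $(R,C,S)$), I use Lemma~\ref{lem-spider}: if $d\geq 1$ the answer is $1$; otherwise it is $|C|$, $|C|+1$, $2$, or $3$ depending on thin/thick and whether $R=\emptyset$ — all computable in $O(1)$ time from data stored at the node (and the subtree for $G[R]$ is recursed into as usual). For a node of type (c) (separable p-component $H$ with $|V(H)|<q$), I use Lemma~\ref{lem-pconn}: I build the reduced graph $G_R$ by replacing $G-H$ with two (adjacent or non-adjacent) vertices, so $|V(G_R)|\leq |V(H)|+2\leq q+1$, and then compute $gn_{s,d}(G_R)$ directly via Theorem~\ref{teo-XP} in time $O((q+1)^{3(q+1)+2})=O(q^{3q+3})$; by Lemma~\ref{lem-pconn} this equals $gn_{s,d}(G)$ — here I must also pass the recursively computed value $gn_{s,d}(G-H)$, but since $G-H$ is replaced wholesale by a bounded gadget, the reduced graph really is the small graph $G_R$ and the recursion on $G-H$ happens as the child node in the decomposition tree; the reduction step only needs to know whether $G-H$ is complete, which is $O(1)$ to maintain.

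For correctness, each case follows immediately from the cited lemma applied to the induced subgraph at that node, and an induction on the decomposition tree (from leaves to the root) shows the final value at the root is $gn_{s,d}(G)$. One subtlety worth spelling out: in cases (b) and (c), the lemmas compute the guard number of the whole subgraph $G'$ at that node in terms of (respectively) $G'[R]$ and $G'-H$, so the bottom-up order guarantees those child values are already available; in case (a), likewise for the two parts. Thus the invariant ``the value stored at each node equals $gn_{s,d}$ of the corresponding induced subgraph'' is maintained, and at the root we read off $gn_{s,d}(G)$ and answer the decision problem by comparing with $k$.

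For the running time: the decomposition takes $O(m+n)$. The decomposition tree has $O(n)$ nodes. At each node, the work is $O(1)$ for cases (a) and (b), and $O(q^{3q+3})$ for cases (c) and (d) where we run the XP algorithm on a graph with at most $q+1$ vertices. Summing over all nodes gives $O(q^{3q+3}\cdot n)$, and adding the decomposition cost gives the claimed bound $O(m+q^{3q+3}\cdot n)$.

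The main obstacle I anticipate is a clean bookkeeping of case (c): one must verify that when the decomposition recurses into $G-H$, and $G-H$ itself later gets replaced by a bounded gadget inside some ancestor's reduction, no information is lost — i.e., that $gn_{s,d}$ is the \emph{only} quantity about $G-H$ that a parent node needs (plus the single bit ``is $G-H$ complete''). This is exactly what Lemma~\ref{lem-pconn} guarantees: the simulation arguments there show the game on $G$ and on $G_R$ are equivalent, and $G_R$ depends on $G-H$ only through its vertex count being collapsed to two and its completeness status. A secondary point requiring care is that in case (c) the lemma is stated for the graph $G$ having exactly the p-component structure, so I should confirm the decomposition tree node indeed presents $H$ with $|V(H)|<q$, which is precisely part (c) of Theorem~\ref{teo-primeval}; no additional argument is needed. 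Everything else is routine composition of the already-established lemmas.
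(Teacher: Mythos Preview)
Your proposal is correct and takes essentially the same approach as the paper: combine the linear-time primeval decomposition with Lemmas~\ref{lem-union-join}, \ref{lem-spider}, \ref{lem-pconn}, and invoke the XP algorithm of Theorem~\ref{teo-XP} on the small graphs arising in cases (c) and (d). The paper's proof is in fact a one-line pointer to exactly these ingredients; you have fleshed out the bookkeeping (bottom-up traversal, maintaining the completeness bit for the $G-H$ gadget, node-count and per-node cost) in a way fully consistent with it. One small remark: in case (b) the value $gn_{s,d}(G[R])$ is never actually used by Lemma~\ref{lem-spider}, so recursing into $G[R]$ is harmless but unnecessary; similarly in case (c) the child value $gn_{s,d}(G-H)$ is not needed, only the completeness bit---you noticed this yourself.
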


\begin{proof}
From the primeval decomposition of $(q,q-4)$-graphs, Lemmas \ref{lem-union-join}, \ref{lem-spider} and \ref{lem-pconn} and the XP algorithm of Theorem \ref{teo-XP} applied in the reduced graph $G_R$ of Lemma \ref{lem-pconn} or in $G$ itself when it has at most $q$ vertices.
\qed
\end{proof}

\section{Graph products and grids}

The strong product $G_1\boxtimes G_2$ of two graphs $G_1$ and $G_2$ is the graph whose vertex set $V(G_1\boxtimes G_2)$ is the cartesian product $V(G_1)\times V(G_2)$ in which distinct vertices $(u_1,u_2)$ and $(v_1,v_2)$ are adjacent in $G_1\boxtimes G_2$ if and only if (a) $u_1=v_1$ and $u_2v_2\in E(G_2)$, or (b) $u_2=v_2$ and $u_1v_1\in E(G_1)$, or (c) $u_1v_1\in E(G_1)$ and $u_2v_2\in E(G_2)$. A King grid is the strong product of two path graphs $P_n\boxtimes P_m$. Figure \ref{fig-king02} shows the King grids $P_3\boxtimes P_3$ and $P_5\boxtimes P_5$.

\subsection{Strong product}

We first obtain a general upper bound on the guard number $\gn_{s,d}(G_1\boxtimes G_2)$ for any two graphs $G_1$ and $G_2$ and any speed $s\geq 2$ and surveillance distance $d\geq 0$.

\begin{theorem}\label{teo-upper}
Let $s\geq 1$ and $d\geq 0$.
The guard number of the strong product of two graphs $G_1$ and $G_2$ satisfies $\gn_{s,d}(G_1\boxtimes G_2)\leq\gn_{s,d}(G_1)\cdot\gn_{s,d}(G_2)$. Moreover, the equality holds if $\gn_{s,d}(G_1)=1$ or $\gn_{s,d}(G_2)=1$.
\end{theorem}

\begin{proof}
Let $k_1=\gn_{s,d}(G_1)$ and $k_2=\gn_{s,d}(G_2)$.
Consider a winning strategy for the guards in $G_1$ with $k_1$ guards and a winning strategy for the guards in $G_2$ with $k_2$ guards when they are placed first. From this, consider the following strategy for the guards in $G_1\boxtimes G_2$. For every guard $g_1$ in $G_1$ and every guard $g_2$ in $G_2$, let $g_1g_2$ be a guard in $G_1\boxtimes G_2$, resulting in a total of $k_1\cdot k_2$ guards.

If the spy is placed initially at the vertex $(u_1,u_2)$ of $G_1\boxtimes G_2$, we follow the winning strategy in $G_1$ (resp. $G_2$) considering that the spy was placed at $u_1$ (resp. $u_2$) initially. If the spy goes from a vertex $(v'_1,v'_2)$ to $(v_1,v_2)$ at some moment of the game, we follow the winning strategy in $G_1$ (resp. $G_2$) considering that the spy goes from $v'_1$ to $v_1$ (resp. from $v'_2$ to $v_2$). The spy can do that in $G_1$ and $G_2$ from the topology of the strong product: the distance in $G_1$ from $v'_1$ to $v_1$ (resp. $G_2$ from $v'_2$ to $v_2$) is at most the distance from $(v'_1,v'_2)$ to $(v_1,v_2)$ in $G_1\boxtimes G_2$, which is at most $s$ (the speed of the spy).

These two combined winning strategies in $G_1$ and $G_2$ lead to a strategy for the guards in the spy game on $G_1\boxtimes G_2$. If the guard $g_1$ of $G_1$ goes from $w'_1$ to $w_1$ and the guard $g_2$ of $G_2$ goes from $w'_2$ to $w_2$, we consider that the guard $g_1g_2$ of $G_1\boxtimes G_2$ goes from $(w'_1,w'_2)$ to $(w_1,w_2)$. The guards can do that in $G_1\boxtimes G_2$ also from the topology of the strong product: the distance from $(w'_1,w'_2)$ to $(w_1,w_2)$ in $G_1\boxtimes G_2$ is the maximum between the distance from $w'_1$ to $w_1$ in $G_1$ and the distance from $w_2'$ to $w_2$ in $G_2$, which is at most $1$.

Finally, since after a finite number of turns there is always a guard $g_1$ and a guard $g_2$ at distance at most $d$ from the spy in the games played in $G_1$ and in $G_2$, respectively, we have that after a finite number of turns there is always a guard $g_1g_2$ at distance at most $d$ from the spy in the game played in $G_1\boxtimes G_2$, as already observed from the topology of the strong product. Since this is a winning strategy for the guards in $G_1\boxtimes G_2$, we are done.

If $\gn_{s,d}(G_2)=1$, then $\gn_{s,d}(G_1\boxtimes G_2)\leq \gn_{s,d}(G_1)\cdot \gn_{s,d}(G_2)=\gn_{s,d}(G_1)$. Since $G_1$ is an induced subgraph of $G_1\boxtimes G_2$ and the distance between any two vertices of $G_1$ is equal to the distance between the same vertices in $G_1\boxtimes G_2$ (from the topology of the strong product), then $\gn_{s,d}(G_1\boxtimes G_2)=\gn_{s,d}(G_1)$. Analogously, $\gn_{s,d}(G_1\boxtimes G_2)=\gn_{s,d}(G_2)$ if $\gn_{s,d}(G_1)=1$.
\qed
\end{proof}

From this, the question about the tightness of the bound arises when the guard numbers of both graphs are at least 2.
In the following subsection, we will show that the general upper bound just derived is the best possible. We will present classes of examples where $gn_{s,d}(G_1) = gn_{s,d}(G_2) = 2$ and yet $gn_{s,d}(G_1\boxtimes G_2)\leq 2 < 4$, as well as examples where $gn_{s,d}(G_1\boxtimes G_2) = 4$ instead. In all of those examples, $G_1$ and $G_2$ will be paths, and so $G_1\boxtimes G_2$ will be King grids.

\subsection{King grids}

Regarding paths, it was proved in \cite{cohen18} that the exact value of $\gn_{s,d}(P_n)$ is
\[
  \gn_{s,d}(P_n) = \left\lceil \frac{n}{2d+2+\left\lfloor \frac{2d}{s-1} \right\rfloor} \right\rceil
\]

First we show that $\gn_{s,d}(P_{2d+3}\boxtimes P_{2d+3})\leq 2$ for any $d\geq 0$ and $s\geq 2$. Notice that $\gn_{s,d}(P_{2d+3})=2$ for any $d\geq 0$ and $s\geq 2d+2$, since a spy in the first vertex $v_0$ of the path $P_{2d+3}$ can go to the last vertex $v_{2d+2}$ of the path, but a guard surveilling the spy in the vertex $v_d$ of the path $P_{2d+3}$ cannot jump to the vertex $v_{d+2}$ to keep surveilling the spy.
Moreover, two guards in the vertices $v_d$ and $v_{d+2}$ are sufficient to surveil the spy.
That is, $\gn_{s,d}(P_{2d+3})^2=4$.

\begin{lemma}
Let $d\geq 0$ and $s\geq 1$ be fixed integers. Then $\gn_{s,d}(P_{2d+3}\boxtimes P_{2d+3})\leq 2$.
\end{lemma}

\begin{proof}
Let us show a guards winning strategy with two guards.
Let $V=\{-(d+1),\ldots,0,\ldots,(d+1)\}\subseteq\ZZ$ ($|V|=2d+3$).
Consider that the vertex set of $P_{2d+3}\boxtimes P_{2d+3}$ is $V\times V$.
The guards winning strategy with two guards in $P_{2d+3}\boxtimes P_{2d+3}$ is the following: one guard is always in the vertex $(0,0)$ and the other guard is in a vertex of $\{-1,0,1\}^2$ surveilling the spy, which is possible since the surveilling distance is $d$ and the maximum distance from $(0,0)$ in $P_{2d+3}\boxtimes P_{2d+3}$ is $d+1$. At any moment of the game, the guards change roles: the guard in $(0,0)$ goes to a vertex of $\{-1,0,1\}^2$ (surveilling the spy) and the other guard goes to $(0,0)$. Since this is a winning strategy for the guards, we are done. See examples in Figure \ref{fig-king02}.
\qed
\end{proof}

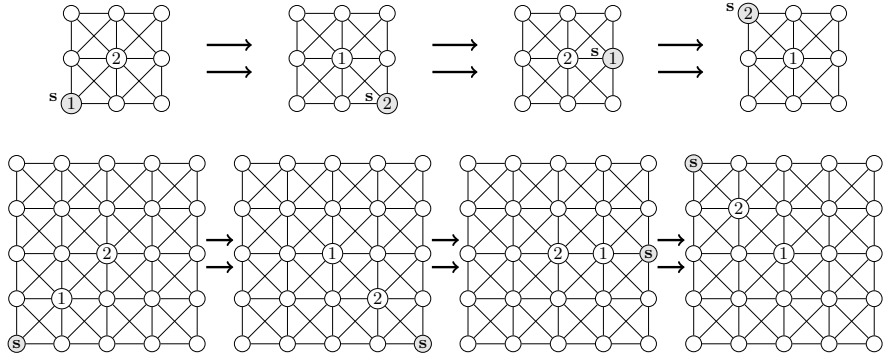
\begin{figure}[hbtp]\centering
\scalebox{0.75}{
\begin{tikzpicture}[scale=0.8]
\tikzstyle{uvertex}=[draw,circle,fill=white!25,minimum size=8pt,inner sep=1pt]
\tikzstyle{cvertex}=[draw,circle,fill=black!10,minimum size=8pt,inner sep=1pt]

\node[cvertex] (a1) at (1,1) {1};
\node[uvertex] (b1) at (1,2) {};
\node[uvertex] (c1) at (1,3) {};
\node[uvertex] (a2) at (2,1) {};
\node[uvertex] (b2) at (2,2) {2};
\node[uvertex] (c2) at (2,3) {};
\node[uvertex] (a3) at (3,1) {};
\node[uvertex] (b3) at (3,2) {};
\node[uvertex] (c3) at (3,3) {};
\path[-,very thin]
   (a1) edge (b1) edge (a2) edge (b2)
   (a2) edge (b2) edge (a3) edge (b3)
   (a3) edge (b3)
   (b1) edge (c1) edge (a2) edge (b2) edge (c2)
   (b2) edge (c2) edge (a3) edge (b3) edge (c3)
   (b3) edge (c3)
   (c1) edge (b2) edge (c2)
   (c2) edge (b3) edge (c3);
\node[]() at (0.6,1.15) {\textbf{s}};

\node[uvertex] (a1) at (6,1) {};
\node[uvertex] (b1) at (6,2) {};
\node[uvertex] (c1) at (6,3) {};
\node[uvertex] (a2) at (7,1) {};
\node[uvertex] (b2) at (7,2) {1};
\node[uvertex] (c2) at (7,3) {};
\node[cvertex] (a3) at (8,1) {2};
\node[uvertex] (b3) at (8,2) {};
\node[uvertex] (c3) at (8,3) {};
\path[-,very thin]
   (a1) edge (b1) edge (a2) edge (b2)
   (a2) edge (b2) edge (a3) edge (b3)
   (a3) edge (b3)
   (b1) edge (c1) edge (a2) edge (b2) edge (c2)
   (b2) edge (c2) edge (a3) edge (b3) edge (c3)
   (b3) edge (c3)
   (c1) edge (b2) edge (c2)
   (c2) edge (b3) edge (c3);
\node[]() at (7.6,1.15) {\textbf{s}};

\node[uvertex] (a1) at (11,1) {};
\node[uvertex] (b1) at (11,2) {};
\node[uvertex] (c1) at (11,3) {};
\node[uvertex] (a2) at (12,1) {};
\node[uvertex] (b2) at (12,2) {2};
\node[uvertex] (c2) at (12,3) {};
\node[uvertex] (a3) at (13,1) {};
\node[cvertex] (b3) at (13,2) {1};
\node[uvertex] (c3) at (13,3) {};
\path[-,very thin]
   (a1) edge (b1) edge (a2) edge (b2)
   (a2) edge (b2) edge (a3) edge (b3)
   (a3) edge (b3)
   (b1) edge (c1) edge (a2) edge (b2) edge (c2)
   (b2) edge (c2) edge (a3) edge (b3) edge (c3)
   (b3) edge (c3)
   (c1) edge (b2) edge (c2)
   (c2) edge (b3) edge (c3);
\node[]() at (12.6,2.15) {\textbf{s}};

\node[uvertex] (a1) at (16,1) {};
\node[uvertex] (b1) at (16,2) {};
\node[cvertex] (c1) at (16,3) {2};
\node[uvertex] (a2) at (17,1) {};
\node[uvertex] (b2) at (17,2) {1};
\node[uvertex] (c2) at (17,3) {};
\node[uvertex] (a3) at (18,1) {};
\node[uvertex] (b3) at (18,2) {};
\node[uvertex] (c3) at (18,3) {};
\path[-,very thin]
   (a1) edge (b1) edge (a2) edge (b2)
   (a2) edge (b2) edge (a3) edge (b3)
   (a3) edge (b3)
   (b1) edge (c1) edge (a2) edge (b2) edge (c2)
   (b2) edge (c2) edge (a3) edge (b3) edge (c3)
   (b3) edge (c3)
   (c1) edge (b2) edge (c2)
   (c2) edge (b3) edge (c3);
\node[]() at (15.6,3.15) {\textbf{s}};

\draw[->, very thick] ( 4,2-0.3) -- ( 5,2-0.3);
\draw[->, very thick] ( 9,2-0.3) -- (10,2-0.3);
\draw[->, very thick] (14,2-0.3) -- (15,2-0.3);
\draw[->, very thick] ( 4,2+0.3) -- ( 5,2+0.3);
\draw[->, very thick] ( 9,2+0.3) -- (10,2+0.3);
\draw[->, very thick] (14,2+0.3) -- (15,2+0.3);
\end{tikzpicture}}\\

\vspace{0.5cm}

\scalebox{0.75}{
\begin{tikzpicture}[scale=0.8]
\tikzstyle{uvertex}=[draw,circle,fill=white!25,minimum size=8pt,inner sep=1pt]
\tikzstyle{cvertex}=[draw,circle,fill=black!10,minimum size=8pt,inner sep=1pt]

\node[cvertex] (a0) at (0,0) {\textbf{s}};
\node[uvertex] (b0) at (0,1) {};
\node[uvertex] (c0) at (0,2) {};
\node[uvertex] (d0) at (0,3) {};
\node[uvertex] (e0) at (0,4) {};
\node[uvertex] (a1) at (1,0) {};
\node[uvertex] (b1) at (1,1) {1};
\node[uvertex] (c1) at (1,2) {};
\node[uvertex] (d1) at (1,3) {};
\node[uvertex] (e1) at (1,4) {};
\node[uvertex] (a2) at (2,0) {};
\node[uvertex] (b2) at (2,1) {};
\node[uvertex] (c2) at (2,2) {2};
\node[uvertex] (d2) at (2,3) {};
\node[uvertex] (e2) at (2,4) {};
\node[uvertex] (a3) at (3,0) {};
\node[uvertex] (b3) at (3,1) {};
\node[uvertex] (c3) at (3,2) {};
\node[uvertex] (d3) at (3,3) {};
\node[uvertex] (e3) at (3,4) {};
\node[uvertex] (a4) at (4,0) {};
\node[uvertex] (b4) at (4,1) {};
\node[uvertex] (c4) at (4,2) {};
\node[uvertex] (d4) at (4,3) {};
\node[uvertex] (e4) at (4,4) {};
\path[-,very thin]
   (a0) edge (b0) edge (a1) edge (b1)
   (a1) edge (b1) edge (a2) edge (b2)
   (a2) edge (b2) edge (a3) edge (b3)
   (a3) edge (b3) edge (a4) edge (b4)
   (a4) edge (b4)
   (b0) edge (c0) edge (a1) edge (b1) edge (c1)
   (b1) edge (c1) edge (a2) edge (b2) edge (c2)
   (b2) edge (c2) edge (a3) edge (b3) edge (c3)
   (b3) edge (c3) edge (a4) edge (b4) edge (c4)
   (b4) edge (c4)
   (c0) edge (d0) edge (b1) edge (c1) edge (d1)
   (c1) edge (d1) edge (b2) edge (c2) edge (d2)
   (c2) edge (d2) edge (b3) edge (c3) edge (d3)
   (c3) edge (d3) edge (b4) edge (c4) edge (d4)
   (c4) edge (d4)

   (d0) edge (e0) edge (c1) edge (d1) edge (e1)
   (d1) edge (e1) edge (c2) edge (d2) edge (e2)
   (d2) edge (e2) edge (c3) edge (d3) edge (e3)
   (d3) edge (e3) edge (c4) edge (d4) edge (e4)
   (d4) edge (e4)

   (e0) edge (d1) edge (e1)
   (e1) edge (d2) edge (e2)
   (e2) edge (d3) edge (e3)
   (e3) edge (d4) edge (e4);

\node[uvertex] (a0) at (0+5,0) {};
\node[uvertex] (b0) at (0+5,1) {};
\node[uvertex] (c0) at (0+5,2) {};
\node[uvertex] (d0) at (0+5,3) {};
\node[uvertex] (e0) at (0+5,4) {};
\node[uvertex] (a1) at (1+5,0) {};
\node[uvertex] (b1) at (1+5,1) {};
\node[uvertex] (c1) at (1+5,2) {};
\node[uvertex] (d1) at (1+5,3) {};
\node[uvertex] (e1) at (1+5,4) {};
\node[uvertex] (a2) at (2+5,0) {};
\node[uvertex] (b2) at (2+5,1) {};
\node[uvertex] (c2) at (2+5,2) {1};
\node[uvertex] (d2) at (2+5,3) {};
\node[uvertex] (e2) at (2+5,4) {};
\node[uvertex] (a3) at (3+5,0) {};
\node[uvertex] (b3) at (3+5,1) {2};
\node[uvertex] (c3) at (3+5,2) {};
\node[uvertex] (d3) at (3+5,3) {};
\node[uvertex] (e3) at (3+5,4) {};
\node[cvertex] (a4) at (4+5,0) {\textbf{s}};
\node[uvertex] (b4) at (4+5,1) {};
\node[uvertex] (c4) at (4+5,2) {};
\node[uvertex] (d4) at (4+5,3) {};
\node[uvertex] (e4) at (4+5,4) {};
\path[-,very thin]
   (a0) edge (b0) edge (a1) edge (b1)
   (a1) edge (b1) edge (a2) edge (b2)
   (a2) edge (b2) edge (a3) edge (b3)
   (a3) edge (b3) edge (a4) edge (b4)
   (a4) edge (b4)
   (b0) edge (c0) edge (a1) edge (b1) edge (c1)
   (b1) edge (c1) edge (a2) edge (b2) edge (c2)
   (b2) edge (c2) edge (a3) edge (b3) edge (c3)
   (b3) edge (c3) edge (a4) edge (b4) edge (c4)
   (b4) edge (c4)
   (c0) edge (d0) edge (b1) edge (c1) edge (d1)
   (c1) edge (d1) edge (b2) edge (c2) edge (d2)
   (c2) edge (d2) edge (b3) edge (c3) edge (d3)
   (c3) edge (d3) edge (b4) edge (c4) edge (d4)
   (c4) edge (d4)

   (d0) edge (e0) edge (c1) edge (d1) edge (e1)
   (d1) edge (e1) edge (c2) edge (d2) edge (e2)
   (d2) edge (e2) edge (c3) edge (d3) edge (e3)
   (d3) edge (e3) edge (c4) edge (d4) edge (e4)
   (d4) edge (e4)

   (e0) edge (d1) edge (e1)
   (e1) edge (d2) edge (e2)
   (e2) edge (d3) edge (e3)
   (e3) edge (d4) edge (e4);

\node[uvertex] (a0) at (0+10,0) {};
\node[uvertex] (b0) at (0+10,1) {};
\node[uvertex] (c0) at (0+10,2) {};
\node[uvertex] (d0) at (0+10,3) {};
\node[uvertex] (e0) at (0+10,4) {};
\node[uvertex] (a1) at (1+10,0) {};
\node[uvertex] (b1) at (1+10,1) {};
\node[uvertex] (c1) at (1+10,2) {};
\node[uvertex] (d1) at (1+10,3) {};
\node[uvertex] (e1) at (1+10,4) {};
\node[uvertex] (a2) at (2+10,0) {};
\node[uvertex] (b2) at (2+10,1) {};
\node[uvertex] (c2) at (2+10,2) {2};
\node[uvertex] (d2) at (2+10,3) {};
\node[uvertex] (e2) at (2+10,4) {};
\node[uvertex] (a3) at (3+10,0) {};
\node[uvertex] (b3) at (3+10,1) {};
\node[uvertex] (c3) at (3+10,2) {1};
\node[uvertex] (d3) at (3+10,3) {};
\node[uvertex] (e3) at (3+10,4) {};
\node[uvertex] (a4) at (4+10,0) {};
\node[uvertex] (b4) at (4+10,1) {};
\node[cvertex] (c4) at (4+10,2) {\textbf{s}};
\node[uvertex] (d4) at (4+10,3) {};
\node[uvertex] (e4) at (4+10,4) {};
\path[-,very thin]
   (a0) edge (b0) edge (a1) edge (b1)
   (a1) edge (b1) edge (a2) edge (b2)
   (a2) edge (b2) edge (a3) edge (b3)
   (a3) edge (b3) edge (a4) edge (b4)
   (a4) edge (b4)
   (b0) edge (c0) edge (a1) edge (b1) edge (c1)
   (b1) edge (c1) edge (a2) edge (b2) edge (c2)
   (b2) edge (c2) edge (a3) edge (b3) edge (c3)
   (b3) edge (c3) edge (a4) edge (b4) edge (c4)
   (b4) edge (c4)
   (c0) edge (d0) edge (b1) edge (c1) edge (d1)
   (c1) edge (d1) edge (b2) edge (c2) edge (d2)
   (c2) edge (d2) edge (b3) edge (c3) edge (d3)
   (c3) edge (d3) edge (b4) edge (c4) edge (d4)
   (c4) edge (d4)

   (d0) edge (e0) edge (c1) edge (d1) edge (e1)
   (d1) edge (e1) edge (c2) edge (d2) edge (e2)
   (d2) edge (e2) edge (c3) edge (d3) edge (e3)
   (d3) edge (e3) edge (c4) edge (d4) edge (e4)
   (d4) edge (e4)

   (e0) edge (d1) edge (e1)
   (e1) edge (d2) edge (e2)
   (e2) edge (d3) edge (e3)
   (e3) edge (d4) edge (e4);

\node[uvertex] (a0) at (0+15,0) {};
\node[uvertex] (b0) at (0+15,1) {};
\node[uvertex] (c0) at (0+15,2) {};
\node[uvertex] (d0) at (0+15,3) {};
\node[cvertex] (e0) at (0+15,4) {\textbf{s}};
\node[uvertex] (a1) at (1+15,0) {};
\node[uvertex] (b1) at (1+15,1) {};
\node[uvertex] (c1) at (1+15,2) {};
\node[uvertex] (d1) at (1+15,3) {2};
\node[uvertex] (e1) at (1+15,4) {};
\node[uvertex] (a2) at (2+15,0) {};
\node[uvertex] (b2) at (2+15,1) {};
\node[uvertex] (c2) at (2+15,2) {1};
\node[uvertex] (d2) at (2+15,3) {};
\node[uvertex] (e2) at (2+15,4) {};
\node[uvertex] (a3) at (3+15,0) {};
\node[uvertex] (b3) at (3+15,1) {};
\node[uvertex] (c3) at (3+15,2) {};
\node[uvertex] (d3) at (3+15,3) {};
\node[uvertex] (e3) at (3+15,4) {};
\node[uvertex] (a4) at (4+15,0) {};
\node[uvertex] (b4) at (4+15,1) {};
\node[uvertex] (c4) at (4+15,2) {};
\node[uvertex] (d4) at (4+15,3) {};
\node[uvertex] (e4) at (4+15,4) {};
\path[-,very thin]
   (a0) edge (b0) edge (a1) edge (b1)
   (a1) edge (b1) edge (a2) edge (b2)
   (a2) edge (b2) edge (a3) edge (b3)
   (a3) edge (b3) edge (a4) edge (b4)
   (a4) edge (b4)
   (b0) edge (c0) edge (a1) edge (b1) edge (c1)
   (b1) edge (c1) edge (a2) edge (b2) edge (c2)
   (b2) edge (c2) edge (a3) edge (b3) edge (c3)
   (b3) edge (c3) edge (a4) edge (b4) edge (c4)
   (b4) edge (c4)
   (c0) edge (d0) edge (b1) edge (c1) edge (d1)
   (c1) edge (d1) edge (b2) edge (c2) edge (d2)
   (c2) edge (d2) edge (b3) edge (c3) edge (d3)
   (c3) edge (d3) edge (b4) edge (c4) edge (d4)
   (c4) edge (d4)

   (d0) edge (e0) edge (c1) edge (d1) edge (e1)
   (d1) edge (e1) edge (c2) edge (d2) edge (e2)
   (d2) edge (e2) edge (c3) edge (d3) edge (e3)
   (d3) edge (e3) edge (c4) edge (d4) edge (e4)
   (d4) edge (e4)

   (e0) edge (d1) edge (e1)
   (e1) edge (d2) edge (e2)
   (e2) edge (d3) edge (e3)
   (e3) edge (d4) edge (e4);

\draw[->, very thick] ( 4+.2,2-0.3) -- ( 5-.2,2-0.3);
\draw[->, very thick] ( 9+.2,2-0.3) -- (10-.2,2-0.3);
\draw[->, very thick] (14+.2,2-0.3) -- (15-.2,2-0.3);
\draw[->, very thick] ( 4+.2,2+0.3) -- ( 5-.2,2+0.3);
\draw[->, very thick] ( 9+.2,2+0.3) -- (10-.2,2+0.3);
\draw[->, very thick] (14+.2,2+0.3) -- (15-.2,2+0.3);
\end{tikzpicture}}

\caption{Guards winning strategies with 2 guards in the King grids $P_3\boxtimes P_3$ and $P_5\boxtimes P_5$ with $(s,d)=(2,0)$ and $(s,d)=(4,1)$, respectively. The spy is represented by $s$ and the guards by 1 and 2.}\label{fig-king02}
\end{figure}

Next we show that $\gn_{s,d}(P_{2d+4}\boxtimes P_{2d+4})=4$ for any $d\geq 0$ and $s\geq d+2$. Notice that $\gn_{s,d}(P_{2d+4})=2$ for any $d\geq 0$ and $s\geq d+2$, since a spy in the first vertex $v_0$ of the path $P_{2d+4}$ can go in two time steps to the last vertex $v_{2d+3}$ of the path, but a guard surveilling the spy in the vertex $v_d$ of the path $P_{2d+3}$ cannot go to the vertex $v_{d+3}$ in two time steps to keep surveilling the spy. Moreover, two guards in the vertices $v_d$ and $v_{d+3}$ are sufficient to surveil the spy. That is, $\gn_{s,d}(P_{2d+4})^2=4$. 

\begin{lemma}
Let $d\geq 0$ be a fixed integer and let $s\geq d+2$. Then $\gn_{s,d}(P_{2d+4}\boxtimes P_{2d+4})=4$.
\end{lemma}

\begin{proof}
Let us show a spy winning strategy with less than four guards. Let $G=P_{2d+4}\boxtimes P_{2d+4}$
Consider that the vertex set of $G$ is $V\times V$, where $V=\{-(d+2),\ldots,0,\ldots,d+1\}\subseteq\ZZ$ ($|V|=2d+4$).
The spy winning strategy in $G$ is the following: 
initially the spy is placed at the vertex $(0,0)$ independently of the initial positions of the guards and then she chooses one of the four ``corner'' vertices to go:  $c_1=(-d-2,-d-2)$, $c_2=(-d-2,d+1)$, $c_3=(d+1,-d-2)$ or $c_4=(d+1,d+1)$. Clearly, the spy can do that since its speed $s\geq d+2$.

It is easy to see that $V(G)$ can be partitioned into four subsets $A_1=[-d-2,-1]\times[-d-2,-1]$, $A_2=[-d-2,-1]\times[0,d+1]$, $A_3=[0,d+1]\times[-d-2,-1]$ and $A_4=[0,d+1]\times[0,d+1]$ with $(d+2)^2$ vertices each. Clearly the corner vertices satisfy $c_1\in A_1$, $c_2\in A_2$, $c_3\in A_3$ and $c_4\in A_4$. Let $i\in\{1,2,3,4\}$. Notice that all vertices at distance at most $d+1$ in $G$ from the corner vertex $c_i$ belongs to $A_i$. Thus, no vertex in $A_j$ with $j\ne i$ can surveil the corner $c_i$ in one time step.
This implies that at least four guards are necessary to surveil the four corner vertices and we are done, since, from Theorem \ref{teo-upper}, four guards are sufficient. See examples in Figure \ref{fig-king03}.
\qed
\end{proof}

\begin{figure}[hbtp]\centering

\begin{tikzpicture}[scale=0.8]
\tikzstyle{uvertex}=[draw,circle,fill=white!25,minimum size=8pt,inner sep=1pt]
\tikzstyle{cvertex}=[draw,circle,fill=black!10,minimum size=8pt,inner sep=1pt]

\node[]() at (2.17,2.35) {\textbf{s}};
\node[uvertex] (a0) at (0,0) {};
\node[uvertex] (b0) at (0,1) {};
\node[uvertex] (c0) at (0,2) {};
\node[uvertex] (d0) at (0,3) {};
\node[uvertex] (a1) at (1,0) {};
\node[uvertex] (b1) at (1,1) {1};
\node[uvertex] (c1) at (1,2) {};
\node[uvertex] (d1) at (1,3) {};
\node[uvertex] (a2) at (2,0) {};
\node[uvertex] (b2) at (2,1) {3};
\node[cvertex] (c2) at (2,2) {2};
\node[uvertex] (d2) at (2,3) {};
\node[uvertex] (a3) at (3,0) {};
\node[uvertex] (b3) at (3,1) {};
\node[uvertex] (c3) at (3,2) {};
\node[uvertex] (d3) at (3,3) {};

\path[-,very thin]
   (a0) edge (b0) edge (a1) edge (b1)
   (a1) edge (b1) edge (a2) edge (b2)
   (a2) edge (b2) edge (a3) edge (b3)
   (a3) edge (b3)
   (b0) edge (c0) edge (a1) edge (b1) edge (c1)
   (b1) edge (c1) edge (a2) edge (b2) edge (c2)
   (b2) edge (c2) edge (a3) edge (b3) edge (c3)
   (b3) edge (c3)
   (c0) edge (d0) edge (b1) edge (c1) edge (d1)
   (c1) edge (d1) edge (b2) edge (c2) edge (d2)
   (c2) edge (d2) edge (b3) edge (c3) edge (d3)
   (c3) edge (d3)
   (d0) edge (c1) edge (d1)
   (d1) edge (c2) edge (d2)
   (d2) edge (c3) edge (d3);

\node[uvertex] (a0) at (0+8,0) {};
\node[uvertex] (b0) at (0+8,1) {};
\node[uvertex] (c0) at (0+8,2) {1};
\node[cvertex] (d0) at (0+8,3) {s};
\node[uvertex] (a1) at (1+8,0) {};
\node[uvertex] (b1) at (1+8,1) {};
\node[uvertex] (c1) at (1+8,2) {3};
\node[uvertex] (d1) at (1+8,3) {2};
\node[uvertex] (a2) at (2+8,0) {};
\node[uvertex] (b2) at (2+8,1) {};
\node[uvertex] (c2) at (2+8,2) {};
\node[uvertex] (d2) at (2+8,3) {};
\node[uvertex] (a3) at (3+8,0) {};
\node[uvertex] (b3) at (3+8,1) {};
\node[uvertex] (c3) at (3+8,2) {};
\node[uvertex] (d3) at (3+8,3) {};
\path[-,very thin]
   (a0) edge (b0) edge (a1) edge (b1)
   (a1) edge (b1) edge (a2) edge (b2)
   (a2) edge (b2) edge (a3) edge (b3)
   (a3) edge (b3)
   (b0) edge (c0) edge (a1) edge (b1) edge (c1)
   (b1) edge (c1) edge (a2) edge (b2) edge (c2)
   (b2) edge (c2) edge (a3) edge (b3) edge (c3)
   (b3) edge (c3)
   (c0) edge (d0) edge (b1) edge (c1) edge (d1)
   (c1) edge (d1) edge (b2) edge (c2) edge (d2)
   (c2) edge (d2) edge (b3) edge (c3) edge (d3)
   (c3) edge (d3)
   (d0) edge (c1) edge (d1)
   (d1) edge (c2) edge (d2)
   (d2) edge (c3) edge (d3);

\draw[->, very thick] (5,1.5) -- (6,1.5);

\tikzset{rectangle/.append style={draw,color=blue,rounded corners,dashed,minimum height=1.2cm},minimum width=1.2cm,thick}
\node[rectangle] (r) at (0.5,0.5) {};
\node[rectangle] (r) at (2.5,0.5) {};
\node[rectangle] (r) at (0.5,2.5) {};
\node[rectangle] (r) at (2.5,2.5) {};
\node[rectangle] (r) at (0.5+8,0.5) {};
\node[rectangle] (r) at (2.5+8,0.5) {};
\node[rectangle] (r) at (0.5+8,2.5) {};
\node[rectangle] (r) at (2.5+8,2.5) {};
\end{tikzpicture}\\

\vspace{0.5cm}

\begin{tikzpicture}[scale=0.8]
\tikzstyle{uvertex}=[draw,circle,fill=white!25,minimum size=8pt,inner sep=1pt]
\tikzstyle{cvertex}=[draw,circle,fill=black!10,minimum size=8pt,inner sep=1pt]

\node[uvertex] (a0) at (0,0) {};
\node[uvertex] (b0) at (0,1) {};
\node[uvertex] (c0) at (0,2) {};
\node[uvertex] (d0) at (0,3) {};
\node[uvertex] (e0) at (0,4) {};
\node[uvertex] (f0) at (0,5) {};
\node[uvertex] (a1) at (1,0) {};
\node[uvertex] (b1) at (1,1) {};
\node[uvertex] (c1) at (1,2) {};
\node[uvertex] (d1) at (1,3) {};
\node[uvertex] (e1) at (1,4) {};
\node[uvertex] (f1) at (1,5) {};
\node[uvertex] (a2) at (2,0) {};
\node[uvertex] (b2) at (2,1) {};
\node[uvertex] (c2) at (2,2) {1};
\node[uvertex] (d2) at (2,3) {};
\node[uvertex] (e2) at (2,4) {2};
\node[uvertex] (f2) at (2,5) {};
\node[uvertex] (a3) at (3,0) {};
\node[uvertex] (b3) at (3,1) {};
\node[uvertex] (c3) at (3,2) {};
\node[cvertex] (d3) at (3,3) {\textbf{s}};
\node[uvertex] (e3) at (3,4) {};
\node[uvertex] (f3) at (3,5) {};
\node[uvertex] (a4) at (4,0) {};
\node[uvertex] (b4) at (4,1) {};
\node[uvertex] (c4) at (4,2) {3};
\node[uvertex] (d4) at (4,3) {};
\node[uvertex] (e4) at (4,4) {};
\node[uvertex] (f4) at (4,5) {};
\node[uvertex] (a5) at (5,0) {};
\node[uvertex] (b5) at (5,1) {};
\node[uvertex] (c5) at (5,2) {};
\node[uvertex] (d5) at (5,3) {};
\node[uvertex] (e5) at (5,4) {};
\node[uvertex] (f5) at (5,5) {};

\path[-,very thin]
   (a0) edge (b0) edge (a1) edge (b1)
   (a1) edge (b1) edge (a2) edge (b2)
   (a2) edge (b2) edge (a3) edge (b3)
   (a3) edge (b3) edge (a4) edge (b4)
   (a4) edge (b4) edge (a5) edge (b5)
   (a5) edge (b5)
   (b0) edge (c0) edge (a1) edge (b1) edge (c1)
   (b1) edge (c1) edge (a2) edge (b2) edge (c2)
   (b2) edge (c2) edge (a3) edge (b3) edge (c3)
   (b3) edge (c3) edge (a4) edge (b4) edge (c4)
   (b4) edge (c4) edge (a5) edge (b5) edge (c5)
   (b5) edge (c5)
   (c0) edge (d0) edge (b1) edge (c1) edge (d1)
   (c1) edge (d1) edge (b2) edge (c2) edge (d2)
   (c2) edge (d2) edge (b3) edge (c3) edge (d3)
   (c3) edge (d3) edge (b4) edge (c4) edge (d4)
   (c4) edge (d4) edge (b5) edge (c5) edge (d5)
   (c5) edge (d5)
   (d0) edge (e0) edge (c1) edge (d1) edge (e1)
   (d1) edge (e1) edge (c2) edge (d2) edge (e2)
   (d2) edge (e2) edge (c3) edge (d3) edge (e3)
   (d3) edge (e3) edge (c4) edge (d4) edge (e4)
   (d4) edge (e4) edge (c5) edge (d5) edge (e5)
   (d5) edge (e5)
   (e0) edge (f0) edge (d1) edge (e1) edge (f1)
   (e1) edge (f1) edge (d2) edge (e2) edge (f2)
   (e2) edge (f2) edge (d3) edge (e3) edge (f3)
   (e3) edge (f3) edge (d4) edge (e4) edge (f4)
   (e4) edge (f4) edge (d5) edge (e5) edge (f5)
   (e5) edge (f5)
   (f0) edge (e1) edge (f1)
   (f1) edge (e2) edge (f2)
   (f2) edge (e3) edge (f3)
   (f3) edge (e4) edge (f4)
   (f4) edge (e5) edge (f5);

\node[uvertex] (a0) at (0+7,0) {};
\node[uvertex] (b0) at (0+7,1) {};
\node[uvertex] (c0) at (0+7,2) {};
\node[uvertex] (d0) at (0+7,3) {};
\node[uvertex] (e0) at (0+7,4) {};
\node[uvertex] (f0) at (0+7,5) {};
\node[uvertex] (a1) at (1+7,0) {};
\node[uvertex] (b1) at (1+7,1) {};
\node[uvertex] (c1) at (1+7,2) {};
\node[uvertex] (d1) at (1+7,3) {};
\node[uvertex] (e1) at (1+7,4) {};
\node[uvertex] (f1) at (1+7,5) {};
\node[uvertex] (a2) at (2+7,0) {};
\node[uvertex] (b2) at (2+7,1) {};
\node[uvertex] (c2) at (2+7,2) {};
\node[uvertex] (d2) at (2+7,3) {};
\node[uvertex] (e2) at (2+7,4) {};
\node[uvertex] (f2) at (2+7,5) {};
\node[uvertex] (a3) at (3+7,0) {};
\node[uvertex] (b3) at (3+7,1) {};
\node[uvertex] (c3) at (3+7,2) {};
\node[uvertex] (d3) at (3+7,3) {1};
\node[uvertex] (e3) at (3+7,4) {2};
\node[uvertex] (f3) at (3+7,5) {};
\node[uvertex] (a4) at (4+7,0) {};
\node[uvertex] (b4) at (4+7,1) {};
\node[uvertex] (c4) at (4+7,2) {};
\node[uvertex] (d4) at (4+7,3) {3};
\node[uvertex] (e4) at (4+7,4) {};
\node[uvertex] (f4) at (4+7,5) {};
\node[uvertex] (a5) at (5+7,0) {};
\node[uvertex] (b5) at (5+7,1) {};
\node[uvertex] (c5) at (5+7,2) {};
\node[uvertex] (d5) at (5+7,3) {};
\node[uvertex] (e5) at (5+7,4) {};
\node[cvertex] (f5) at (5+7,5) {\textbf{s}};

\path[-,very thin]
   (a0) edge (b0) edge (a1) edge (b1)
   (a1) edge (b1) edge (a2) edge (b2)
   (a2) edge (b2) edge (a3) edge (b3)
   (a3) edge (b3) edge (a4) edge (b4)
   (a4) edge (b4) edge (a5) edge (b5)
   (a5) edge (b5)
   (b0) edge (c0) edge (a1) edge (b1) edge (c1)
   (b1) edge (c1) edge (a2) edge (b2) edge (c2)
   (b2) edge (c2) edge (a3) edge (b3) edge (c3)
   (b3) edge (c3) edge (a4) edge (b4) edge (c4)
   (b4) edge (c4) edge (a5) edge (b5) edge (c5)
   (b5) edge (c5)
   (c0) edge (d0) edge (b1) edge (c1) edge (d1)
   (c1) edge (d1) edge (b2) edge (c2) edge (d2)
   (c2) edge (d2) edge (b3) edge (c3) edge (d3)
   (c3) edge (d3) edge (b4) edge (c4) edge (d4)
   (c4) edge (d4) edge (b5) edge (c5) edge (d5)
   (c5) edge (d5)
   (d0) edge (e0) edge (c1) edge (d1) edge (e1)
   (d1) edge (e1) edge (c2) edge (d2) edge (e2)
   (d2) edge (e2) edge (c3) edge (d3) edge (e3)
   (d3) edge (e3) edge (c4) edge (d4) edge (e4)
   (d4) edge (e4) edge (c5) edge (d5) edge (e5)
   (d5) edge (e5)
   (e0) edge (f0) edge (d1) edge (e1) edge (f1)
   (e1) edge (f1) edge (d2) edge (e2) edge (f2)
   (e2) edge (f2) edge (d3) edge (e3) edge (f3)
   (e3) edge (f3) edge (d4) edge (e4) edge (f4)
   (e4) edge (f4) edge (d5) edge (e5) edge (f5)
   (e5) edge (f5)
   (f0) edge (e1) edge (f1)
   (f1) edge (e2) edge (f2)
   (f2) edge (e3) edge (f3)
   (f3) edge (e4) edge (f4)
   (f4) edge (e5) edge (f5);

\draw[->, very thick] ( 5.5,2.5) -- ( 6.5,2.5);
\tikzset{rectangle/.append style={draw,color=blue,rounded corners,dashed,minimum height=2.1cm},minimum width=2.1cm,thick}
\node[rectangle] (r) at (1,1) {};
\node[rectangle] (r) at (4,1) {};
\node[rectangle] (r) at (1,4) {};
\node[rectangle] (r) at (4,4) {};
\node[rectangle] (r) at (1+7,1) {};
\node[rectangle] (r) at (4+7,1) {};
\node[rectangle] (r) at (1+7,4) {};
\node[rectangle] (r) at (4+7,4) {};
\end{tikzpicture}

\caption{Spy winning strategies with 3 guards in the King grids $P_4\boxtimes P_4$ and $P_6\boxtimes P_6$ with $(s,d)=(2,0)$ and $(s,d)=(3,1)$, respectively. The spy is represented by $s$ and the guards by 1, 2 and 3.}\label{fig-king03}
\end{figure}
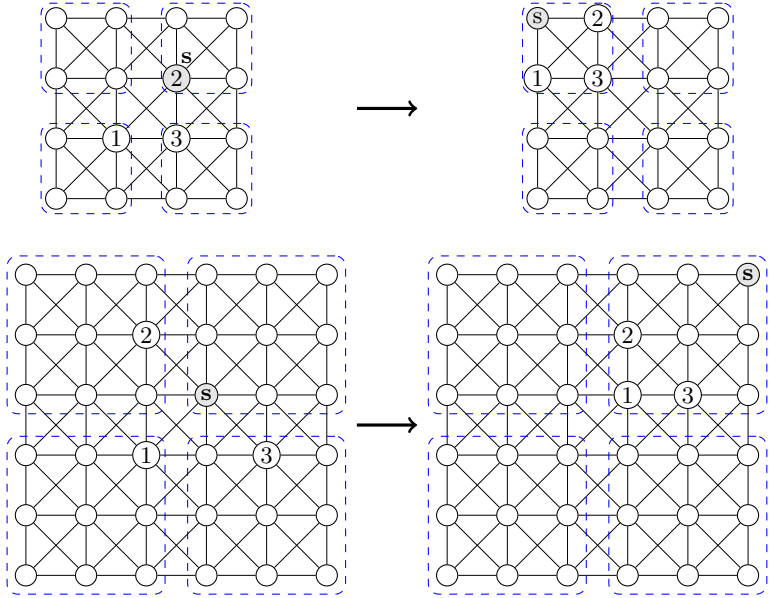

The previous lemmas showed examples with few guards. We show that $\gn_{s,d}(P_n\boxtimes P_n)$ can be very close to $\gn_{s,d}(P_n)^2$, when many guards are necessary.
More specifically, we show that $\gn_{s,d}(P_n\boxtimes P_n)\geq(\gn_{s,d}(P_n)-1)^2$ in many cases where $\gn_{s,d}(P_n)$ can be any positive integer.

\begin{lemma}
Let $d\geq 0$ and $2\leq k\leq 2d+2$ be fixed integers and let $s\geq (k-1)(2d+3)$.
Then $\gn_{s,d}(P_{k(2d+3)})=k+1$ and $k^2\leq \gn_{s,d}(P_{k(2d+3)}\boxtimes P_{k(2d+3)})\leq (k+1)^2$.
\end{lemma}

\begin{proof}
The vertex set of $P_{k(2d+3)}\boxtimes P_{k(2d+3)}$ can be partitioned into $k^2$ subsets, each of which induces a copy of the King grid $P_{2d+3}\boxtimes P_{2d+3}$. If there is always one of these subsets without a guard, the spy can go to the vertex in the center of this subset and no guard can surveil the spy, since the distance from the center to the border is $d+1$ in the King grid $P_{2d+3}\boxtimes P_{2d+3}$. The spy can do this infinitely many times since her speed is at least $(k-1)(2d+3)$ (notice that the diameter is $k(2d+3)-1$ and the maximum distance
between the centers of any two of those King grid copies is at most $k(2d+3)-1-2(d+1)=(k-1)(2d+3)$).

Thus, at least $k^2$ guards are necessary.
Moreover, since
\[
  \gn_{s,d}(P_n) = \left\lceil \frac{n}{2d+2+\left\lfloor \frac{2d}{s-1} \right\rfloor} \right\rceil,
\]
then $\gn_{s,d}(P_{k(2d+3)})=k+1$ for any $d\geq 0$, $2\leq k\leq 2d+2$ and $s\geq(k-1)(2d+3)>2d+1$. Then, from Theorem \ref{teo-upper}, $(k+1)^2$ guards are sufficient.
\qed
\end{proof}

\subsection{Cartesian and lexicographical products}

There are other well studied graph products on the vertex set $V(G_1)\times V(G_2)$, such as the cartesian product $G_1\square G_2$ and the lexicographical products $G_1\cdot G_2$ and  $G_2\cdot G_1$ (see \cite{klavzar11} for a reference).
In the cartesian product $G_1\square G_2$, $(u_1,u_2)$ and $(v_1,v_2)$ are adjacent if and only if (a) $u_1=v_1$ and $u_2v_2\in E(G_2)$ or (b) $u_2=v_2$ and $u_1v_1\in E(G_1)$. In the lexicographical product $G_1\cdot G_2$, $(u_1,u_2)$ and $(v_1,v_2)$ are adjacent if and only if (a) $u_1=v_1$ and $u_2v_2\in E(G_2)$ or (b) $u_1v_1\in E(G_1)$.

It is easy to see, from the definitions, that the cartesian product $G_1\square G_2$ is a subgraph of the strong product $G_1\boxtimes G_2$, which is a subgraph of the lexicographical products $G_1\cdot G_2$ and  $G_2\cdot G_1$.
It is also easy to find examples in which the upper bound of Theorem \ref{teo-upper} fails in the cartesian product. For example, $\gn_{2,0}(P_2\square P_2)=\gn_{2,0}(C_4)=2>1=\gn_{2,0}(P_2)^2$. Regarding the lexicographical product, failing examples are not so easy to find.
The next lemma presents an example in which the upper bound of Theorem \ref{teo-upper} fails in both cartesian product and lexicographical product.

\begin{lemma}
$\gn_{2,1}(P_5\square P_5)>\gn_{2,1}(P_5)^2$ and $\gn_{2,1}(P_5\cdot P_5)>\gn_{2,1}(P_5)^2$.
\end{lemma}

\begin{proof}
The graph products $P_5\square P_5$ and $P_5\cdot P_5$ are shown in Figure \ref{fig-prod01}. Consider that the vertex set of $P_5$ is $\{-2,-1,0,1,2\}^2$. It is easy to see that $\gn_{2,1}(P_5)=1$ (one guard is sufficient): a guard positioned in the vertex 0 can go to -1 and 1 in order to surveil the vertices -2 and 2, and the spy cannot go from -2 to 2 in one time step (and vice-versa). However, in the cartesian product $P_5\square P_5$, a spy positioned in $(0,0)$ can go to one of the ``corner'' vertices $(-2,-2)$, $(-2,2)$, $(2,-2)$ and $(2,2)$ in two time steps, which are impossible to surveil by one guard.

Now consider the lexicographical product $P_5\cdot P_5$. Let us show a spy winning strategy against one guard. Initially, the spy is positioned in the vertex $(-2,0)$. Suppose the guard is in a vertex $(x,0)$ surveilling the spy with $x\in\{-2,-1,0,1,2\}$, since otherwise the spy can go to one of the vertices $(-2,-2)$ or $(-2,2)$ in one time step and the guard cannot surveil her. Without loss of generality, assume that the guard was positioned in the vertex $(-1,0)$. In the first turn, the spy goes to the vertex $(2,0)$. The guard cannot go to $(1,0)$ or $(2,0)$ in one step, which are the only vertices of the form $(x',0)$ from which a guard controls the spy in $(2,0)$ with surveillance distance $d=1$. So, suppose w.l.g. that he goes to the vertex $(0,1)$. Thus, in the second turn, the spy goes to the vertex $(2,-2)$. Since the spy can do this infinitely many times, she wins.
\qed
\end{proof}

\begin{figure}[ht]\centering

\scalebox{0.75}{
\begin{tikzpicture}[scale=0.8]
\tikzstyle{uvertex}=[draw,circle,fill=white!25,minimum size=8pt,inner sep=1pt]
\tikzstyle{cvertex}=[draw,circle,fill=black!10,minimum size=8pt,inner sep=1pt]

\node[uvertex] (a0) at (0-7,0) {};
\node[uvertex] (b0) at (0-7,1) {};
\node[uvertex] (c0) at (0-7,2) {};
\node[uvertex] (d0) at (0-7,3) {};
\node[uvertex] (e0) at (0-7,4) {};
\node[uvertex] (a1) at (1-7,0) {};
\node[uvertex] (b1) at (1-7,1) {};
\node[uvertex] (c1) at (1-7,2) {};
\node[uvertex] (d1) at (1-7,3) {};
\node[uvertex] (e1) at (1-7,4) {};
\node[uvertex] (a2) at (2-7,0) {};
\node[uvertex] (b2) at (2-7,1) {};
\node[cvertex] (c2) at (2-7,2) {\textbf{s}};
\node[uvertex] (d2) at (2-7,3) {1};
\node[uvertex] (e2) at (2-7,4) {};
\node[uvertex] (a3) at (3-7,0) {};
\node[uvertex] (b3) at (3-7,1) {};
\node[uvertex] (c3) at (3-7,2) {};
\node[uvertex] (d3) at (3-7,3) {};
\node[uvertex] (e3) at (3-7,4) {};
\node[uvertex] (a4) at (4-7,0) {};
\node[uvertex] (b4) at (4-7,1) {};
\node[uvertex] (c4) at (4-7,2) {};
\node[uvertex] (d4) at (4-7,3) {};
\node[uvertex] (e4) at (4-7,4) {};
\path[-,very thin]
   (a0) edge (b0) edge (a1)
   (a1) edge (b1) edge (a2)
   (a2) edge (b2) edge (a3)
   (a3) edge (b3) edge (a4)
   (a4) edge (b4)
   (b0) edge (c0) edge (b1)
   (b1) edge (c1) edge (b2)
   (b2) edge (c2) edge (b3)
   (b3) edge (c3) edge (b4)
   (b4) edge (c4)
   (c0) edge (d0) edge (c1)
   (c1) edge (d1) edge (c2)
   (c2) edge (d2) edge (c3)
   (c3) edge (d3) edge (c4)
   (c4) edge (d4)

   (d0) edge (e0) edge (d1)
   (d1) edge (e1) edge (d2)
   (d2) edge (e2) edge (d3)
   (d3) edge (e3) edge (d4)
   (d4) edge (e4)

   (e0) edge (e1)
   (e1) edge (e2)
   (e2) edge (e3)
   (e3) edge (e4);

\node[uvertex] (a0) at (0,0) {};
\node[uvertex] (b0) at (0,1) {};
\node[cvertex] (c0) at (0,2) {\textbf{s}};
\node[uvertex] (d0) at (0,3) {};
\node[uvertex] (e0) at (0,4) {};
\node[uvertex] (a1) at (1,0) {};
\node[uvertex] (b1) at (1,1) {};
\node[uvertex] (c1) at (1,2) {1};
\node[uvertex] (d1) at (1,3) {};
\node[uvertex] (e1) at (1,4) {};
\node[uvertex] (a2) at (2,0) {};
\node[uvertex] (b2) at (2,1) {};
\node[uvertex] (c2) at (2,2) {};
\node[uvertex] (d2) at (2,3) {};
\node[uvertex] (e2) at (2,4) {};
\node[uvertex] (a3) at (3,0) {};
\node[uvertex] (b3) at (3,1) {};
\node[uvertex] (c3) at (3,2) {};
\node[uvertex] (d3) at (3,3) {};
\node[uvertex] (e3) at (3,4) {};
\node[uvertex] (a4) at (4,0) {};
\node[uvertex] (b4) at (4,1) {};
\node[uvertex] (c4) at (4,2) {};
\node[uvertex] (d4) at (4,3) {};
\node[uvertex] (e4) at (4,4) {};
\path[-,very thin]
   (a0) edge (b0) edge (a1) edge (b1)
   (a1) edge (b1) edge (a2) edge (b2)
   (a2) edge (b2) edge (a3) edge (b3)
   (a3) edge (b3) edge (a4) edge (b4)
   (a4) edge (b4)
   (b0) edge (c0) edge (a1) edge (b1) edge (c1)
   (b1) edge (c1) edge (a2) edge (b2) edge (c2)
   (b2) edge (c2) edge (a3) edge (b3) edge (c3)
   (b3) edge (c3) edge (a4) edge (b4) edge (c4)
   (b4) edge (c4)
   (c0) edge (d0) edge (b1) edge (c1) edge (d1)
   (c1) edge (d1) edge (b2) edge (c2) edge (d2)
   (c2) edge (d2) edge (b3) edge (c3) edge (d3)
   (c3) edge (d3) edge (b4) edge (c4) edge (d4)
   (c4) edge (d4)

   (d0) edge (e0) edge (c1) edge (d1) edge (e1)
   (d1) edge (e1) edge (c2) edge (d2) edge (e2)
   (d2) edge (e2) edge (c3) edge (d3) edge (e3)
   (d3) edge (e3) edge (c4) edge (d4) edge (e4)
   (d4) edge (e4)

   (e0) edge (d1) edge (e1)
   (e1) edge (d2) edge (e2)
   (e2) edge (d3) edge (e3)
   (e3) edge (d4) edge (e4)

	(a0) edge (b2) edge (b3) edge (b4)
	(a1) edge (b3) edge (b4)
	(a2) edge (b0) edge (b4)
	(a3) edge (b0) edge (b1)
	(a4) edge (b0) edge (b1) edge (b2)
	(b0) edge (c2) edge (c3) edge (c4)
	(b1) edge (c3) edge (c4)
	(b2) edge (c0) edge (c4)
	(b3) edge (c0) edge (c1)
	(b4) edge (c0) edge (c1) edge (c2)
	(c0) edge (d2) edge (d3) edge (d4)
	(c1) edge (d3) edge (d4)
	(c2) edge (d0) edge (d4)
	(c3) edge (d0) edge (d1)
	(c4) edge (d0) edge (d1) edge (d2)
	(d0) edge (e2) edge (e3) edge (e4)
	(d1) edge (e3) edge (e4)
	(d2) edge (e0) edge (e4)
	(d3) edge (e0) edge (e1)
	(d4) edge (e0) edge (e1) edge (e2);

\node[uvertex] (a0) at (0+5,0) {};
\node[uvertex] (b0) at (0+5,1) {};
\node[uvertex] (c0) at (0+5,2) {};
\node[uvertex] (d0) at (0+5,3) {};
\node[uvertex] (e0) at (0+5,4) {};
\node[uvertex] (a1) at (1+5,0) {};
\node[uvertex] (b1) at (1+5,1) {};
\node[uvertex] (c1) at (1+5,2) {};
\node[uvertex] (d1) at (1+5,3) {};
\node[uvertex] (e1) at (1+5,4) {};
\node[uvertex] (a2) at (2+5,0) {};
\node[uvertex] (b2) at (2+5,1) {};
\node[uvertex] (c2) at (2+5,2) {};
\node[uvertex] (d2) at (2+5,3) {1};
\node[uvertex] (e2) at (2+5,4) {};
\node[uvertex] (a3) at (3+5,0) {};
\node[uvertex] (b3) at (3+5,1) {};
\node[uvertex] (c3) at (3+5,2) {};
\node[uvertex] (d3) at (3+5,3) {};
\node[uvertex] (e3) at (3+5,4) {};
\node[uvertex] (a4) at (4+5,0) {};
\node[uvertex] (b4) at (4+5,1) {};
\node[cvertex] (c4) at (4+5,2) {\textbf{s}};
\node[uvertex] (d4) at (4+5,3) {};
\node[uvertex] (e4) at (4+5,4) {};
\path[-,very thin]
   (a0) edge (b0) edge (a1) edge (b1)
   (a1) edge (b1) edge (a2) edge (b2)
   (a2) edge (b2) edge (a3) edge (b3)
   (a3) edge (b3) edge (a4) edge (b4)
   (a4) edge (b4)
   (b0) edge (c0) edge (a1) edge (b1) edge (c1)
   (b1) edge (c1) edge (a2) edge (b2) edge (c2)
   (b2) edge (c2) edge (a3) edge (b3) edge (c3)
   (b3) edge (c3) edge (a4) edge (b4) edge (c4)
   (b4) edge (c4)
   (c0) edge (d0) edge (b1) edge (c1) edge (d1)
   (c1) edge (d1) edge (b2) edge (c2) edge (d2)
   (c2) edge (d2) edge (b3) edge (c3) edge (d3)
   (c3) edge (d3) edge (b4) edge (c4) edge (d4)
   (c4) edge (d4)

   (d0) edge (e0) edge (c1) edge (d1) edge (e1)
   (d1) edge (e1) edge (c2) edge (d2) edge (e2)
   (d2) edge (e2) edge (c3) edge (d3) edge (e3)
   (d3) edge (e3) edge (c4) edge (d4) edge (e4)
   (d4) edge (e4)

   (e0) edge (d1) edge (e1)
   (e1) edge (d2) edge (e2)
   (e2) edge (d3) edge (e3)
   (e3) edge (d4) edge (e4)

	(a0) edge (b2) edge (b3) edge (b4)
	(a1) edge (b3) edge (b4)
	(a2) edge (b0) edge (b4)
	(a3) edge (b0) edge (b1)
	(a4) edge (b0) edge (b1) edge (b2)
	(b0) edge (c2) edge (c3) edge (c4)
	(b1) edge (c3) edge (c4)
	(b2) edge (c0) edge (c4)
	(b3) edge (c0) edge (c1)
	(b4) edge (c0) edge (c1) edge (c2)
	(c0) edge (d2) edge (d3) edge (d4)
	(c1) edge (d3) edge (d4)
	(c2) edge (d0) edge (d4)
	(c3) edge (d0) edge (d1)
	(c4) edge (d0) edge (d1) edge (d2)
	(d0) edge (e2) edge (e3) edge (e4)
	(d1) edge (e3) edge (e4)
	(d2) edge (e0) edge (e4)
	(d3) edge (e0) edge (e1)
	(d4) edge (e0) edge (e1) edge (e2);

\node[uvertex] (a0) at (0+10,0) {};
\node[uvertex] (b0) at (0+10,1) {};
\node[uvertex] (c0) at (0+10,2) {};
\node[uvertex] (d0) at (0+10,3) {};
\node[uvertex] (e0) at (0+10,4) {};
\node[uvertex] (a1) at (1+10,0) {};
\node[uvertex] (b1) at (1+10,1) {};
\node[uvertex] (c1) at (1+10,2) {};
\node[uvertex] (d1) at (1+10,3) {};
\node[uvertex] (e1) at (1+10,4) {};
\node[uvertex] (a2) at (2+10,0) {};
\node[uvertex] (b2) at (2+10,1) {};
\node[uvertex] (c2) at (2+10,2) {};
\node[uvertex] (d2) at (2+10,3) {};
\node[uvertex] (e2) at (2+10,4) {};
\node[uvertex] (a3) at (3+10,0) {};
\node[uvertex] (b3) at (3+10,1) {};
\node[uvertex] (c3) at (3+10,2) {1};
\node[uvertex] (d3) at (3+10,3) {};
\node[uvertex] (e3) at (3+10,4) {};
\node[cvertex] (a4) at (4+10,0) {\textbf{s}};
\node[uvertex] (b4) at (4+10,1) {};
\node[uvertex] (c4) at (4+10,2) {};
\node[uvertex] (d4) at (4+10,3) {};
\node[uvertex] (e4) at (4+10,4) {};
\path[-,very thin]
   (a0) edge (b0) edge (a1) edge (b1)
   (a1) edge (b1) edge (a2) edge (b2)
   (a2) edge (b2) edge (a3) edge (b3)
   (a3) edge (b3) edge (a4) edge (b4)
   (a4) edge (b4)
   (b0) edge (c0) edge (a1) edge (b1) edge (c1)
   (b1) edge (c1) edge (a2) edge (b2) edge (c2)
   (b2) edge (c2) edge (a3) edge (b3) edge (c3)
   (b3) edge (c3) edge (a4) edge (b4) edge (c4)
   (b4) edge (c4)
   (c0) edge (d0) edge (b1) edge (c1) edge (d1)
   (c1) edge (d1) edge (b2) edge (c2) edge (d2)
   (c2) edge (d2) edge (b3) edge (c3) edge (d3)
   (c3) edge (d3) edge (b4) edge (c4) edge (d4)
   (c4) edge (d4)

   (d0) edge (e0) edge (c1) edge (d1) edge (e1)
   (d1) edge (e1) edge (c2) edge (d2) edge (e2)
   (d2) edge (e2) edge (c3) edge (d3) edge (e3)
   (d3) edge (e3) edge (c4) edge (d4) edge (e4)
   (d4) edge (e4)

   (e0) edge (d1) edge (e1)
   (e1) edge (d2) edge (e2)
   (e2) edge (d3) edge (e3)
   (e3) edge (d4) edge (e4)

	(a0) edge (b2) edge (b3) edge (b4)
	(a1) edge (b3) edge (b4)
	(a2) edge (b0) edge (b4)
	(a3) edge (b0) edge (b1)
	(a4) edge (b0) edge (b1) edge (b2)
	(b0) edge (c2) edge (c3) edge (c4)
	(b1) edge (c3) edge (c4)
	(b2) edge (c0) edge (c4)
	(b3) edge (c0) edge (c1)
	(b4) edge (c0) edge (c1) edge (c2)
	(c0) edge (d2) edge (d3) edge (d4)
	(c1) edge (d3) edge (d4)
	(c2) edge (d0) edge (d4)
	(c3) edge (d0) edge (d1)
	(c4) edge (d0) edge (d1) edge (d2)
	(d0) edge (e2) edge (e3) edge (e4)
	(d1) edge (e3) edge (e4)
	(d2) edge (e0) edge (e4)
	(d3) edge (e0) edge (e1)
	(d4) edge (e0) edge (e1) edge (e2);

\draw[->, very thick] ( 4+.2,2-0.3) -- ( 5-.2,2-0.3);
\draw[->, very thick] ( 9+.2,2-0.3) -- (10-.2,2-0.3);
\draw[->, very thick] ( 4+.2,2+0.3) -- ( 5-.2,2+0.3);
\draw[->, very thick] ( 9+.2,2+0.3) -- (10-.2,2+0.3);
\end{tikzpicture}}

\caption{The cartesian product $P_5\square P_5$ and the lexicographical product $P_5\cdot P_5$ with a spy winning strategy with 1 guard and $(s,d)=(2,1)$. The spy is represented by $s$ and the guard by 1.}\label{fig-prod01}
\end{figure}
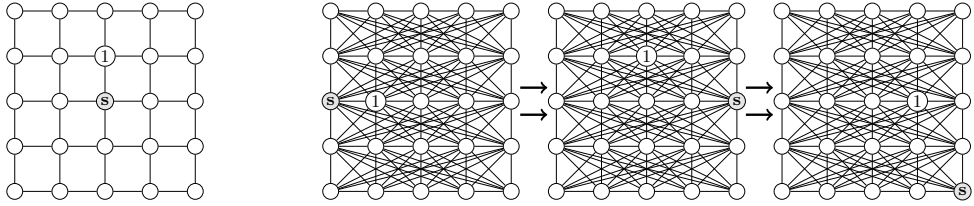

Nevertheless, regarding the lexicographical product, it is possible to prove a better general upper bound if the surveilling distance $d\geq 2$. In fact, we obtain the following stronger result.

\begin{theorem}\label{teo-upper2}
Let $s\geq 1$, $d\geq 2$ and let $G_1$ and $G_2$ be two graphs.
If $G_1$ has no isolated vertex, then $\gn_{s,d}(G_1\cdot G_2) = \gn_{s,d}(G_1)$.
Otherwise, $\gn_{s,d}(G_1\cdot G_2) = \gn_{s,d}(G_1) + (\gn_{s,d}(G_2)-1)\cdot iso(G_1)$, where $iso(G_1)$ is the number of isolated vertices in $G_1$.
\end{theorem}

\begin{proof}
First assume that $G_1$ has no isolated vertex and then every vertex of $G_1$ has a neighbor.
Notice that every guard positioned in a vertex $(u_1,u_2)$ of $G_1\cdot G_2$ is surveilling any vertex $(u_1,v_2)$ with $v_2\in V(G_2)$, since the surveilling distance is at least 2 and $u_1$ has a neighbor $v_1$, which leads to the existence of the path $(u_1,u_2)-(v_1,v_2)-(u_1,v_2)$ with length 2.
Thus, the guards can follow their winning strategy in $G_1$ with $\gn_{s,d}(G_1)$ guards in order to win the spy game in the lexicographical product $G_1\cdot G_2$, by moving only over the vertices in $\{(v_1,b)\ :\ v_1\in V(G_1)\}$, where $b$ is any fixed arbitrary vertex of $G_2$.

If $G_1$ has $iso(G_1)\geq 1$ isolated vertices, then $G_1\cdot G_2$ contains $iso(G_1)\geq 1$ components isomorphic to $G_2$, which need at least $\gn_{s,d}(G_2)\cdot iso(G_1)$ guards to surveil the spy.
Considering $G_1$ without its isolated vertices, $\gn_{s,d}(G_1)-iso(G_1)$ guards are necessary, since each isolated vertex must have a guard because the guards are placed first, and we are done.
\qed
\end{proof}

\section{Hardness in bipartite graphs with bounded diameter}

We prove in this section that deciding if the guards have a winning strategy in the spy game variant when they are placed first with speed $s$ and surveillance distance $d$ is NP-hard for any $s\geq 2$ and $d\geq 0$ even in bipartite graphs with bounded diameter. We also extend this to a W[2]-hardness proof when parameterized by the number $k$ of guards.
We obtain a reduction from the Set Cover problem  similar to the reduction of  \cite{cohen18} for general graphs. Nevertheless, the extension to bipartite graphs is more complex and the constructed bipartite graph is significantly different from the one of \cite{cohen18}.

\begin{theorem}
Let $s\geq 2$, $d\geq 0$ and $k\geq 1$ be fixed.
The problem of deciding if $k$ guards have a winning strategy in the $(s,d)$-spy game variant in which they are placed first is NP-hard and W[2]-hard (parameterized by the number $k$ of guards) even in bipartite graphs with diameter at most $4(d+1)$. Morever, the corresponding optimization problem of minimizing the number of guards is Log-APX-hard and $(1-\varepsilon)\ln n$-inapproximable in polynomial time for any constant $0<\varepsilon<1$, unless P=NP.
\end{theorem}

The proof of the theorem obtains a reduction from the \textsc{Set Cover} Problem and is divided in three lemmas regarding the relation between $s$ and $d$, where $r=r(s,d)=d\mod(s-1)$ is the remainder of the division of $d$ by $s-1$:
\begin{itemize}
\item {\bf Case 1:} $s<2(r+1)$
\item {\bf Case 2:} $s>2(r+1)$
\item {\bf Case 3:} $s=2(r+1)$
\end{itemize}

An instance of the \textsc{Set Cover} Problem is a family $\mathcal{S}=\{S_1,\ldots,S_m\}$ of sets and an integer $c$, and the objective is to decide if there exists a subfamily $\mathcal{C}=\{S_{i_1},\ldots,S_{i_c}\}\subseteq\mathcal{S}$ such that $|\mathcal{C}|\leq c$ and $S_{i_1}\cup\ldots\cup S_{i_c} = U$, where $U=S_1\cup\ldots\cup S_m$ (we say that $\mathcal{C}$ is a set cover of $U$). Given an instance $(\mathcal{S},c)$ of Set Cover, we construct a graph $G=G_{s,d}(\mathcal{S},c)$ and an integer $K=K_{s,d}(\mathcal{S},c)$ such that there exists a cover $\mathcal{C}\subseteq \mathcal{S}$ of $U$ with size at most $c$ if and only if $\gn_{s,d}(G)\leq K$. Note that the reductions presented below are actually fpt-reductions and preserve approximation ratio. Therefore, since the \textsc{Set Cover} Problem is W[2]-hard (when parameterized by the size $c$ of the set cover) and has no $(1-\varepsilon)\ln(n)$ approximation polynomial algorithm for any constant $0<\varepsilon<1$ (unless P=NP) \cite{AlonMS06,dana15}, we not only prove the NP-hardness but also the fact that the problem is W[2]-hard (when parameterized by the number of guards) and cannot be approximated in polynomial time up to some logarithmic ratio (unless $P=NP$).

\begin{definition}\label{def_pq}
Given $s\geq 2$ and $d\geq 0$, let $p=p(s,d)=d+\left\lceil\frac{d+1}{s-1}\right\rceil$ and
\[
q(s,d)=
 \begin{cases}
    0,   &\mbox{if Case 1}\\
    p,   &\mbox{if Case 2},\\
    p-1, &\mbox{if Case 3}
 \end{cases}
\]

Let $(\mathcal{S},c)$ an instance of Set Cover, where $\mathcal{S}=\{S_1,\ldots,S_m\}$, and let $U=S_1\cup\ldots\cup S_m=\{u_1,\ldots,u_n\}$. Let the number $K=K_{s,d}(\mathcal{S},c)$ of guards be
\[
K_{s,d}(\mathcal{S},c)=
 \begin{cases}
    c,   &\mbox{if Case 1},\\
    c+2, &\mbox{if Case 2 or 3}
 \end{cases}
\]

Let $G=G_{s,d}(\mathcal{S},c)$ be defined as follows: for every set $S_j\in\mathcal{S}$, create a new vertex $S_j$ in $G$ and, for every element $u_i\in U$, create a path $U_i$ with $p$ vertices $u_{i,1},\ldots,u_{i,p}$. If $u_i\in S_j$, add the edge $u_{i,1}S_j$ in $G$. Create a new vertex $z_0$ and add all possible edges between $z_0$ and $\{S_1,\ldots,S_m\}$ in $G$. Finally, if it is not Case 1, create two paths $Z=(z_1,\ldots,z_q)$ and $Z'=(z'_1,\ldots,z'_{q'})$ and add the edges $z_0z_1$ and $z_0z'_1$, where $q'=p$ in Case 2 and $q'=p+1$ in Case 3.
\end{definition}

See Figures \ref{fig-reduction}-\ref{fig-reduction3b} for examples.
In the figures, the main position of the spy is indicated with an asterisk (*) and the main positions of the guards are indicated with a big dot ($\bullet$).
Let $Z$ and $Z'$ be the set of vertices of $G$ of the form $z_t$ and $z'_t$, respectively.
Also recall that, given $1\leq i\leq n$, $U_i=\{u_{i,1},\ldots,u_{i,p}\}$.
Notice that the diameter of $G$ is at most $2(p+1)\leq 2(2d+1+1)=4(d+1)$.

The next proposition from Lemma 7 of \cite{cohen18} will be very useful in the proofs.

\begin{proposition}[\cite{cohen18}]\label{lema_pq}
Let $d\geq 1$ and  $2\leq s\leq d+1$.
Let $p=d+\lceil\frac{d+1}{s-1}\rceil$ and $r=d\mod (s-1)$.
Let $P=(u_{-1},u_0,u_1,\cdots,u_p)$ be a path graph and consider one guard playing the game on $P$ against a spy, which is the first to move, with speed $s$ and surveillance distance $d$.
\begin{itemize}
\item[(a)] if the guard starts in $u_{-1}$, the spy can evade surveillance;
\item[(b)] if the guard starts in $u_0$, the spy can evade surveillance if and only if she starts in $u_i$ with $i\geq r+1$.
\end{itemize}
\end{proposition}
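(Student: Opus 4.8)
The plan is to prove the proposition by elementary race arguments on the path, writing $d=(s-1)\ell+r$ with $\ell=\lfloor d/(s-1)\rfloor$ and $r=d\bmod(s-1)\in\{0,\dots,s-2\}$, so that $p=d+\lceil (d+1)/(s-1)\rceil=d+\ell+1$; note $\ell\ge 1$ since $s\le d+1$.

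First I would settle part~(a) and the ``if'' direction of part~(b) together by giving the spy the naive \emph{run-right} strategy: each turn she advances $s$ steps towards $u_p$ (or as many as remain once she reaches it). Starting from $u_i$ she occupies $u_p$ right after her $t^\ast$-th move, where $t^\ast=\max\{1,\lceil (p-i)/s\rceil\}$; since $p+1\le s(\ell+1)$ one gets $t^\ast\le\ell+1$ for every $i\in\{-1,\dots,p\}$, and since $p-(r+1)=s\ell$ one gets $t^\ast\le\ell$ whenever $i\ge r+1$. A guard starting at $u_{-1}$ (part~(a)) or at $u_0$ (part~(b)) can be at position at most $\ell$ just after the spy's $t^\ast$-th move, whereas the spy is at $u_p$ and $p-\ell=d+1$; hence after the guard's reply the spy sits at distance at least $d+1>d$ and wins. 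The only friction here is bookkeeping of the ceilings and the corner case $i=p$.

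For the ``only if'' direction of part~(b) I would exhibit a winning strategy for one guard starting at $u_0$ when $i\le r$. The guiding picture is that the guard only ever needs to patrol the segment from $u_0$ to $u_{\ell+1}=u_{p-d}$: from $u_{\ell+1}$ he surveils $u_p$ at distance exactly $d$, and from a low enough position he surveils $u_{-1}$; so his job is to shuttle inside $[u_0,u_{\ell+1}]$ staying within distance $d$ of the spy, stepping one vertex towards $u_{\ell+1}$ when the spy is weakly to his right, one vertex towards $u_0$ when she is weakly to his left, and staying put otherwise. The invariant I would carry after each guard move is that he surveils the spy and is ``no more behind schedule than she is ahead'' --- precisely, the number of turns he would still need to reach $u_{\ell+1}$ (resp.\ $u_0$) is at most the number she would need to reach $u_p$ (resp.\ $u_{-1}$). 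Because $i\le r$ the guard starts at $u_0$ already meeting this invariant, and the calibration $p=d+\ell+1$, $r=d\bmod(s-1)$ makes the guard's worst case of $\ell+1$ turns to reach $u_{\ell+1}$ exactly match the spy's fastest $\ell+1$-turn dash from $u_r$ to $u_p$; any slower or zig-zagging spy play only relaxes the margins, and a repeated spy configuration is a guard win by the finiteness convention.

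The hard part will be making this invariant airtight at the two ends of the path: when the spy comes within $s$ of $u_{-1}$ her reachable set is truncated on the left --- which helps the guard --- but ``turns to the wall'' jumps discontinuously there, and symmetrically near $u_p$; moreover in extreme cases (e.g.\ $s=3$, $d=2$) the guard at $u_{\ell+1}$ does not surveil $u_{-1}$, so one must exploit that the spy cannot threaten both ends in quick succession. I would therefore verify the guard's strategy in the three regimes ``spy within $s$ of $u_{-1}$'', ``spy in the interior'', and ``spy within $s$ of $u_p$'', checking in each that one step preserves the invariant and never lets the spy jump to a vertex at distance more than $d+1$ from the guard; the remainder is routine arithmetic with floors and ceilings.
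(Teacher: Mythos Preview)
This proposition is not proved in the present paper at all: it is quoted from \cite{cohen18} and used as a black box in the hardness lemmas (Lemmas~\ref{lem-npc3a}, \ref{lem-npc3c}, \ref{lem-npc3b}), with no argument supplied here. So there is no in-paper proof to compare your proposal against.

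As an independent assessment, your treatment of part~(a) and of the ``if'' direction of~(b) via the run-right strategy is correct; the identities $p+1=s\ell+r+2\le s(\ell+1)$ and $p-(r+1)=s\ell$ indeed give the claimed bounds on $t^\ast$, and the terminal distance $p-\ell=d+1$ finishes both cases. For the ``only if'' direction your shuttle idea is the right one, and a direct race check confirms that a guard stepping right from $u_0$ stays within distance exactly $d$ of a spy dashing from $u_i$ ($i\le r$) to $u_p$, with equality at step $t=\ell+1$. What remains is what you yourself flag: your ``turns-to-endpoint'' invariant does not by itself bound the spy--guard distance at each step, so under zig-zags you will need to carry an additional distance inequality (or replace the invariant by one that encodes both the position and the slack simultaneously). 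That is genuine work but not a conceptual obstacle; the plan is sound, just not yet a proof.
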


In the following lemmas, we use the notations $U_{i,>j}=\{u_{i,t}\in U_i: t>j\}$ and $U_{i,\leq j}=\{u_{i,t}\in U_i: t\leq j\}$, given positive integers $i$ and $j$. Define analogously $Z_{\leq j}$, $Z_{>j}$, $Z'_{\leq j}$ and $Z'_{>j}$.

\begin{lemma}\label{lem-npc1}
Given a graph $G$ and an integer $K>0$, deciding if $\gn_{s,d}(G)\leq K$ is NP-hard for every $s\geq 2$ and $d\geq 1$ satisfying Case 1 ($s<2r+2)$, where $r=d\mod(s-1)$).
\end{lemma}

\begin{proof}
Reduction from \textsc{Set Cover}. Let $(\mathcal{S},c)$ be an instance of Set Cover. Recall Definition \ref{def_pq} and let $p=p(s,d)=d+\left\lceil\frac{d+1}{s-1}\right\rceil$, $q=q(s,d)=0$, $G=G_{s,d}(\mathcal{S},c)$ and $K=K_{s,d}(\mathcal{S},c)=c$.

First, suppose that there is no cover $\mathcal{C}$ of $U$ with at most $c$ sets in $\mathcal{S}$. We prove that the spy wins against at most $c$ guards, starting in $z_0$. Indeed, since there are at most $c$ guards and there is no cover of $U$ with $c$ sets in $\mathcal{S}$, then there exists some $1\leq i\leq n$ such that there is no guard in $N[U_i]$. Thus, both in case $s\leq d+1$ from Proposition \ref{lema_pq}(a) and in case  $s>d+1$, the spy can go to $u_{i,p}$, winning the game in the path $S_{c(i)}-u_{i,1}-\ldots-u_{i,p}$.

Now, suppose that there is a cover $\mathcal{C}=\{S_{j_1},\ldots,S_{j_c}\}$ of $U$ with $c$ sets in $\mathcal{S}$.
We prove that $c$ guards win if they are placed first.
The strategy of the guards is as follows. Occupy initially the vertices $S_{j_1},\ldots,S_{j_c}$. Since $\mathcal{C}$ is a cover of $U$, we can define for any element $u_i\in U$ an index $c(i)$ such that $u_i\in S_{c(i)}\in\mathcal{C}$.
If the spy is in a vertex of $U_{i,\leq r}$ for some $i$, then the guards occupy the initial vertices, controlling the spy both in case $s\leq d+1$ as per Proposition \ref{lema_pq}(b) and in case $s>d+1$. If the spy moved to a vertex of $U_{i,>r}$ from a vertex not in $U_{i,>r}$, then the guard in $S_{c(i)}$ goes to $u_{i,1}$, controlling the spy in the path $u_{i,1}-\ldots-u_{i,p}$ both in case $s\leq d+1$ from Proposition \ref{lema_pq}(b) and in case $s>d+1$. Moreover, since $s<2(r+1)$, the spy cannot go in one step from a vertex of $U_{i,>r}$ to a vertex of $U_{j,>r}$ with $i\ne j$. Then, if the spy leaves $U_{i,>r}$, then the guards reoccupy the initial vertices: the guard in $u_{i,1}$ goes to $S_{c(i)}$. With this strategy, the guards win the game.
\qed
\end{proof}

\input{figs/case1.tex}

\begin{lemma}\label{lem-npc2}
Given a graph $G$ and an integer $K>0$, deciding if $\gn_{s,d}(G)\leq K$ is NP-hard for every $s\geq 2$ and $d\geq 1$ satisfying Case 2 ($s>2r+2$, where $r=d\mod(s-1)$).
\end{lemma}

\begin{proof}
Reduction from \textsc{Set Cover}. Let $(\mathcal{S},c)$ be an instance of Set Cover. Recall Definition \ref{def_pq} and let $p=p(s,d)=d+\lceil\frac{d+1}{s-1}\rceil$, $q=q(s,d)=p$, $G=G_{s,d}(\mathcal{S},c)$ and $K=K_{s,d}(\mathcal{S},c)=c+2$.

First, suppose that there is no cover $\mathcal{C}$ of $U$ with at most $c$ sets in $\mathcal{S}$. We prove that the spy wins against at most $c+2$ guards, starting in $z_{r+1}$. 
Both in case $s\leq d+1$ from Proposition \ref{lema_pq}(a) in the path $z_0-z_q$ and in case $s>d+1$, we may assume w.l.g. that there is a guard in $z_1$. Moreover, since the spy can go to $z'_{r+1}$ in one step, we may assume w.l.g. both in case $s\geq d+1$ from Proposition \ref{lema_pq}(a) in the path $z_0-z'_q$ and in case $s>d+1$ that there is a guard in $z_0$.
Since there are $c+2$ guards, then there is at most $c$ guards outside $\{z_0,z_1,z'_1,\ldots,z_q,z'_q\}$. Since there is no cover of $U$ with $c$ sets in $\mathcal{S}$, then there exists some $1\leq i\leq n$ such that there is no guard in $N[U_i]$. Thus, in one step, the spy can go to $u_{i,r+1}$ (recall that $s\geq 2r+3$), winning the game in the path $z_0-S_{c(i)}-u_{i,1}-\ldots-u_{i,p}$ both in case $s\leq d+1$ from Proposition \ref{lema_pq}(b) and in case $s>d+1$.

Now, suppose that there is a cover $\mathcal{C}=\{S_{j_1},\ldots,S_{j_c}\}$ of $U$ with $c$ sets in $\mathcal{S}$. We prove that $c+2$ guards win if they are placed first. The strategy of the guards is as follows. Occupy initially the vertices $z_0,z_0,S_{j_1},\ldots,S_{j_c}$ (two guards in $z_0$). Since $\mathcal{C}$ is a cover of $U$, we can define for any element $u_i\in U$ an index $c(i)$ such that $u_i\in S_{c(i)}\in\mathcal{C}$.
If the spy is not in a vertex of $U_{i,>r}$, $Z_{>r}$ or $Z'_{>r}$, then the guards occupy the initial vertices, controlling the spy both in case $s\leq d+1$ as per Proposition \ref{lema_pq}(b) and in case $s>d+1$. If the spy moved to a vertex of $U_{i,>r}$ from a vertex not in $U_{i,>r}$, then the guard in $S_{c(i)}$ goes to $u_{i,1}$ (controlling the spy in the path $u_{i,1}-\ldots-u_{i,p}$ both in case $s\leq d+1$ from Proposition \ref{lema_pq}(b)) and in case $s>d+1$, and one guard occupying $z_0$ goes to $S_{c(i)}$.
If the spy goes from a vertex of $U_{i,>r}$ to a vertex of $U_{j,>r}$ with $c(i)=c(j)$, then the guard in $S_{c(i)}$ goes to $u_{j,1}$ (controlling the spy) and the guard in $u_{i,1}$ goes to $S_{c(i)}$.
If the spy goes from a vertex of $U_{i,>r}$ to a vertex of $U_{j,>r}$ with $c(i)\ne c(j)$, then the guard in $S_{c(j)}$ goes to $u_{j,1}$ (controlling the spy), the guard in $z_0$ goes to $S_{c(j)}$, the guard in $S_{c(i)}$ goes to $z_0$ and the guard in $u_{i,1}$ goes to $S_{c(i)}$. 
If the spy goes to $Z_{>r}$ from a vertex not in $Z_{>r}$, then the spy in $z_0$ goes to $z_1$ (controlling the spy), the spy in $S_{c(i)}$ goes to $z_0$ and the spy in $u_{i,1}$ goes to $S_{c(i)}$. Analogously for $Z'$.
If the spy goes from $U_{i,>r}$ to $U_{j,\leq r}$ or $Z_{<r}$ for some $j$, then the guards occupy the initial configuration: the guard in $u_{i,1}$ goes to $S_{c(i)}$ and the guard in $S_{c(i)}$ goes to $z_0$ (two guards in $z_0$), controlling the spy both in case $s\leq d+1$ as per Proposition \ref{lema_pq}(a) and in case $s>d+1$.
With this strategy, the guards win the game.
\qed
\end{proof}

\begin{figure}[ht]\centering\scalebox{0.65}{

\begin{tikzpicture}[scale=0.9]
\tikzstyle{vertex}=[draw,circle,fill=white!25,minimum size=8pt,inner sep=1pt]
\tikzstyle{cvertex}=[draw,circle,fill=black!50,minimum size=10pt,inner sep=1pt]
\tikzstyle{xvertex}=[draw,circle,fill=gray!50,minimum size=10pt,inner sep=1pt]
\tikzstyle{zvertex}=[draw,circle,fill=white!50,minimum size=12pt,inner sep=1pt]

\node[]() at (1,0.0) {$s\geq 3$};
\node[]() at (1,0.5) {$d=0$};
\node[]() at (1,3.0) {$p=1$};
\node[]() at (1,2.5) {$q=1$};
\node[]() at (1,2.0) {$K=5$};

\node at (7,0.35) {*};
\node at (6.7,-.25) {$\bullet$};
\node at (4.7,-.25) {$\bullet$};
\node[zvertex] (z0) at (5,0) {$z_0$};
\node[zvertex] (z1) at (7,0) {$z_1$};
\node[zvertex] (zl1) at (7,-1) {$z'_1$};

\node[cvertex] (x1) at (3,2) {$S_1$}; \node at (2.8,1.6) {$\bullet$};
\node[xvertex] (x2) at (4,2) {$S_2$};
\node[cvertex] (x3) at (5,2) {$S_3$}; \node at (4.8,1.6) {$\bullet$};
\node[xvertex] (x4) at (6,2) {$S_4$};
\node[cvertex] (x5) at (7,2) {$S_5$}; \node at (6.8,1.6) {$\bullet$};

\node[vertex] (y11) at (1,4) {$u_{1,1}$};
\node[vertex] (y21) at (2,4) {$u_{2,1}$};
\node[vertex] (y31) at (3,4) {$u_{3,1}$};
\node[vertex] (y41) at (4,4) {$u_{4,1}$};
\node[vertex] (y51) at (5,4) {$u_{5,1}$};
\node[vertex] (y61) at (6,4) {$u_{6,1}$};
\node[vertex] (y71) at (7,4) {$u_{7,1}$};
\node[vertex] (y81) at (8,4) {$u_{8,1}$};
\node[vertex] (y91) at (9,4) {$u_{9,1}$};

\path[-,thin]
(z0) edge (z1) edge (zl1) edge (x1) edge (x2) edge (x3) edge (x4) edge (x5);

\path[-,very thick]
(x1) edge (y11) edge (y21) edge (y31)
(x3) edge (y41) edge (y51) edge (y61)
(x5) edge (y71) edge (y81) edge (y91);

\path[-,very thin]
(x2) edge (y21) edge (y61) edge (y71)
(x4) edge (y31) edge (y51) edge (y71);

\node[]() at (11,0)   {$s\geq 5$};
\node[]() at (11,0.5) {$d=1$};
\node[]() at (11,3.0) {$p=2$};
\node[]() at (11,2.5) {$q=2$};
\node[]() at (11,2.0) {$K=5$};

\node at (18,0.35) {*};
\node at (16.7,-.25) {$\bullet$};
\node at (14.7,-.25) {$\bullet$};
\node[zvertex] (zz0) at (15, 0) {$z_0$};
\node[vertex] (zz1)  at (17, 0) {$z_1$};
\node[vertex] (zzl1) at (17,-1) {$z'_1$};
\node[vertex] (zz2)  at (18, 0) {$z_2$};
\node[vertex] (zzl2) at (18,-1) {$z'_2$};

\node[cvertex] (xx1) at (13,2) {$S_1$}; \node at (12.8,1.6) {$\bullet$};
\node[xvertex] (xx2) at (14,2) {$S_2$};
\node[cvertex] (xx3) at (15,2) {$S_3$}; \node at (14.8,1.6) {$\bullet$};
\node[xvertex] (xx4) at (16,2) {$S_4$};
\node[cvertex] (xx5) at (17,2) {$S_5$}; \node at (16.8,1.6) {$\bullet$};

\node[vertex] (yy11) at (11,4) {$u_{1,1}$};
\node[vertex] (yy21) at (12,4) {$u_{2,1}$};
\node[vertex] (yy31) at (13,4) {$u_{3,1}$};
\node[vertex] (yy41) at (14,4) {$u_{4,1}$};
\node[vertex] (yy51) at (15,4) {$u_{5,1}$};
\node[vertex] (yy61) at (16,4) {$u_{6,1}$};
\node[vertex] (yy71) at (17,4) {$u_{7,1}$};
\node[vertex] (yy81) at (18,4) {$u_{8,1}$};
\node[vertex] (yy91) at (19,4) {$u_{9,1}$};

\node[vertex] (yy12) at (11,5.4) {$u_{1,2}$};
\node[vertex] (yy22) at (12,5.4) {$u_{2,2}$};
\node[vertex] (yy32) at (13,5.4) {$u_{3,2}$};
\node[vertex] (yy42) at (14,5.4) {$u_{4,2}$};
\node[vertex] (yy52) at (15,5.4) {$u_{5,2}$};
\node[vertex] (yy62) at (16,5.4) {$u_{6,2}$};
\node[vertex] (yy72) at (17,5.4) {$u_{7,2}$};
\node[vertex] (yy82) at (18,5.4) {$u_{8,2}$};
\node[vertex] (yy92) at (19,5.4) {$u_{9,2}$};

\path[-,thin]
(zz0) edge (xx1) edge (xx2) edge (xx3) edge (xx4) edge (xx5)
(zz1) edge (zz0) edge (zz2)
(zzl1) edge (zz0) edge (zzl2);

\path[-,very thick]
(xx1) edge (yy11) edge (yy21) edge (yy31)
(xx3) edge (yy41) edge (yy51) edge (yy61)
(xx5) edge (yy71) edge (yy81) edge (yy91);

\path[-,very thin]
(xx2) edge (yy21) edge (yy61) edge (yy71)
(xx4) edge (yy31) edge (yy51) edge (yy71);

\path[-,thick]
(yy11) edge (yy12)
(yy21) edge (yy22)
(yy31) edge (yy32)
(yy41) edge (yy42)
(yy51) edge (yy52)
(yy61) edge (yy62)
(yy71) edge (yy72)
(yy81) edge (yy82)
(yy91) edge (yy92);

\node[]() at (21,0.0) {$s\geq 7$};
\node[]() at (21,0.5) {$d=2$};
\node[]() at (21,3.0) {$p=3$};
\node[]() at (21,2.5) {$q=3$};
\node[]() at (21,2.0) {$K=5$};

\node at (29,0.35) {*};
\node at (24.7,-.25) {$\bullet$};
\node at (26.7,-.25) {$\bullet$};
\node[zvertex] (z0) at (25, 0) {$z_0$};
\node[vertex] (zz1) at (27, 0) {$z_1$};
\node[vertex] (zzl1)at (27,-1) {$z'_1$};
\node[vertex] (zz2) at (28, 0) {$z_2$};
\node[vertex] (zzl2)at (28,-1) {$z'_2$};
\node[vertex] (zz3) at (29, 0) {$z_3$};
\node[vertex] (zzl3)at (29,-1) {$z'_3$};

\node[cvertex] (x1) at (23,2) {$S_1$}; \node at (22.8,1.6) {$\bullet$};
\node[xvertex] (x2) at (24,2) {$S_2$};
\node[cvertex] (x3) at (25,2) {$S_3$}; \node at (24.8,1.6) {$\bullet$};
\node[xvertex] (x4) at (26,2) {$S_4$};
\node[cvertex] (x5) at (27,2) {$S_5$}; \node at (26.8,1.6) {$\bullet$};

\node[vertex] (y11) at (21,4) {$u_{1,1}$};
\node[vertex] (y21) at (22,4) {$u_{2,1}$};
\node[vertex] (y31) at (23,4) {$u_{3,1}$};
\node[vertex] (y41) at (24,4) {$u_{4,1}$};
\node[vertex] (y51) at (25,4) {$u_{5,1}$};
\node[vertex] (y61) at (26,4) {$u_{6,1}$};
\node[vertex] (y71) at (27,4) {$u_{7,1}$};
\node[vertex] (y81) at (28,4) {$u_{8,1}$};
\node[vertex] (y91) at (29,4) {$u_{9,1}$};

\node[vertex] (y12) at (21,4.7) {};
\node[vertex] (y22) at (22,4.7) {};
\node[vertex] (y32) at (23,4.7) {};
\node[vertex] (y42) at (24,4.7) {};
\node[vertex] (y52) at (25,4.7) {};
\node[vertex] (y62) at (26,4.7) {};
\node[vertex] (y72) at (27,4.7) {};
\node[vertex] (y82) at (28,4.7) {};
\node[vertex] (y92) at (29,4.7) {};

\node[vertex] (y13) at (21,5.4) {$u_{1,3}$};
\node[vertex] (y23) at (22,5.4) {$u_{2,3}$};
\node[vertex] (y33) at (23,5.4) {$u_{3,3}$};
\node[vertex] (y43) at (24,5.4) {$u_{4,3}$};
\node[vertex] (y53) at (25,5.4) {$u_{5,3}$};
\node[vertex] (y63) at (26,5.4) {$u_{6,3}$};
\node[vertex] (y73) at (27,5.4) {$u_{7,3}$};
\node[vertex] (y83) at (28,5.4) {$u_{8,3}$};
\node[vertex] (y93) at (29,5.4) {$u_{9,3}$};

\path[-,thin]
(z0) edge (x1) edge (x2) edge (x3) edge (x4) edge (x5)
(zz1) edge (z0) edge (zz2) (zz2) edge (zz3)
(zzl1) edge (z0) edge (zzl2) (zzl2) edge (zzl3);

\path[-,very thick]
(x1) edge (y11) edge (y21) edge (y31)
(x3) edge (y41) edge (y51) edge (y61)
(x5) edge (y71) edge (y81) edge (y91);

\path[-,very thin]
(x2) edge (y21) edge (y61) edge (y71)
(x4) edge (y31) edge (y51) edge (y71);

\path[-,thick]
(y11) edge (y12) (y12) edge (y13)
(y21) edge (y22) (y22) edge (y23)
(y31) edge (y32) (y32) edge (y33)
(y41) edge (y42) (y42) edge (y43)
(y51) edge (y52) (y52) edge (y53)
(y61) edge (y62) (y62) edge (y63)
(y71) edge (y72) (y72) edge (y73)
(y81) edge (y82) (y82) edge (y83)
(y91) edge (y92) (y92) edge (y93);
\end{tikzpicture}}

\caption{Case 2 ($s>2r+2$) when $s>2d+2$ and consequently $r=d$. Reduction for Figure \ref{fig-reduction} Set Cover instance in Lemma~\ref{lem-npc2} proof.}\label{fig-reduction2}
\end{figure}

\begin{figure}[ht]\centering\scalebox{0.65}{
\begin{tikzpicture}[scale=0.9]
\tikzstyle{vertex}=[draw,circle,fill=white!25,minimum size=8pt,inner sep=1pt]
\tikzstyle{cvertex}=[draw,circle,fill=black!50,minimum size=10pt,inner sep=1pt]
\tikzstyle{xvertex}=[draw,circle,fill=gray!50,minimum size=10pt,inner sep=1pt]
\tikzstyle{zvertex}=[draw,circle,fill=white!50,minimum size=12pt,inner sep=1pt]

\node[]() at (1,0.0) {$s=3$};
\node[]() at (1,0.5) {$d=2$};
\node[]() at (1,3.0) {$p=4$};
\node[]() at (1,2.5) {$q=4$};
\node[]() at (1,2.0) {$K=5$};

\node at (7.25,0.25) {*};
\node at (4.7,-.25) {$\bullet$};
\node at (6.7,-.25) {$\bullet$};
\node[zvertex](z0) at (5,0) {$z_0$};
\node[vertex] (z1) at (7.0,0) {$z_1$};
\node[vertex] (z2) at (7.7,0) {};
\node[vertex] (z3) at (8.3,0) {};
\node[vertex] (z4) at (9.0,0) {$z_4$};
\node[vertex] (zl1) at (7.0,-1) {$z'_1$};
\node[vertex] (zl2) at (7.7,-1) {};
\node[vertex] (zl3) at (8.3,-1) {};
\node[vertex] (zl4) at (9.0,-1) {$z'_4$};

\node[cvertex] (xx1) at (3,2) {$S_1$}; \node at (2.8,1.6) {$\bullet$};
\node[xvertex] (xx2) at (4,2) {$S_2$};
\node[cvertex] (xx3) at (5,2) {$S_3$}; \node at (4.8,1.6) {$\bullet$};
\node[xvertex] (xx4) at (6,2) {$S_4$};
\node[cvertex] (xx5) at (7,2) {$S_5$}; \node at (6.8,1.6) {$\bullet$};

\node[vertex] (yy11) at (1,4) {$u_{1,1}$};
\node[vertex] (yy21) at (2,4) {$u_{2,1}$};
\node[vertex] (yy31) at (3,4) {$u_{3,1}$};
\node[vertex] (yy41) at (4,4) {$u_{4,1}$};
\node[vertex] (yy51) at (5,4) {$u_{5,1}$};
\node[vertex] (yy61) at (6,4) {$u_{6,1}$};
\node[vertex] (yy71) at (7,4) {$u_{7,1}$};
\node[vertex] (yy81) at (8,4) {$u_{8,1}$};
\node[vertex] (yy91) at (9,4) {$u_{9,1}$};

\node[vertex] (yy12) at (1,4.7) {};
\node[vertex] (yy22) at (2,4.7) {};
\node[vertex] (yy32) at (3,4.7) {};
\node[vertex] (yy42) at (4,4.7) {};
\node[vertex] (yy52) at (5,4.7) {};
\node[vertex] (yy62) at (6,4.7) {};
\node[vertex] (yy72) at (7,4.7) {};
\node[vertex] (yy82) at (8,4.7) {};
\node[vertex] (yy92) at (9,4.7) {};

\node[vertex] (yy13) at (1,5.3) {};
\node[vertex] (yy23) at (2,5.3) {};
\node[vertex] (yy33) at (3,5.3) {};
\node[vertex] (yy43) at (4,5.3) {};
\node[vertex] (yy53) at (5,5.3) {};
\node[vertex] (yy63) at (6,5.3) {};
\node[vertex] (yy73) at (7,5.3) {};
\node[vertex] (yy83) at (8,5.3) {};
\node[vertex] (yy93) at (9,5.3) {};

\node[vertex] (yy14) at (1,6) {$u_{1,4}$};
\node[vertex] (yy24) at (2,6) {$u_{2,4}$};
\node[vertex] (yy34) at (3,6) {$u_{3,4}$};
\node[vertex] (yy44) at (4,6) {$u_{4,4}$};
\node[vertex] (yy54) at (5,6) {$u_{5,4}$};
\node[vertex] (yy64) at (6,6) {$u_{6,4}$};
\node[vertex] (yy74) at (7,6) {$u_{7,4}$};
\node[vertex] (yy84) at (8,6) {$u_{8,4}$};
\node[vertex] (yy94) at (9,6) {$u_{9,4}$};

\path[-,thin]
(z0) edge (xx1) edge (xx2) edge (xx3) edge (xx4) edge (xx5)
(z1) edge (z0) edge (z2) (z3) edge (z2) edge (z4)
(zl1) edge (z0) edge (zl2) (zl3) edge (zl2) edge (zl4);

\path[-,very thick]
(xx1) edge (yy11) edge (yy21) edge (yy31)
(xx3) edge (yy41) edge (yy51) edge (yy61)
(xx5) edge (yy71) edge (yy81) edge (yy91);

\path[-,very thin]
(xx2) edge (yy21) edge (yy61) edge (yy71)
(xx4) edge (yy31) edge (yy51) edge (yy71);

\path[-,thick]
(yy11) edge (yy12) (yy12) edge (yy13) (yy13) edge (yy14)
(yy21) edge (yy22) (yy22) edge (yy23) (yy23) edge (yy24)
(yy31) edge (yy32) (yy32) edge (yy33) (yy33) edge (yy34)
(yy41) edge (yy42) (yy42) edge (yy43) (yy43) edge (yy44)
(yy51) edge (yy52) (yy52) edge (yy53) (yy53) edge (yy54)
(yy61) edge (yy62) (yy62) edge (yy63) (yy63) edge (yy64)
(yy71) edge (yy72) (yy72) edge (yy73) (yy73) edge (yy74)
(yy81) edge (yy82) (yy82) edge (yy83) (yy83) edge (yy84)
(yy91) edge (yy92) (yy92) edge (yy93) (yy93) edge (yy94);

\node[]() at (11,0.0) {$s=4$};
\node[]() at (11,0.5) {$d=3$};
\node[]() at (11,3.0) {$p=5$};
\node[]() at (11,2.5) {$q=5$};
\node[]() at (11,2.0) {$K=5$};

\node at (17.05,0.25) {*};
\node at (14.7,-.25) {$\bullet$};
\node at (16.5,-.25) {$\bullet$};
\node[zvertex](z0) at (15.0, 0) {$z_0$};
\node[vertex] (z1) at (16.8, 0) {$z_1$};
\node[vertex] (z2) at (17.4, 0) {};
\node[vertex] (z3) at (17.9, 0) {};
\node[vertex] (z4) at (18.4, 0) {};
\node[vertex] (z5) at (19.0, 0) {$z_5$};
\node[vertex] (zl1) at (16.8,-1){$z'_1$};
\node[vertex] (zl2) at (17.4,-1){};
\node[vertex] (zl3) at (17.9,-1){};
\node[vertex] (zl4) at (18.4,-1){};
\node[vertex] (zl5) at (19.0,-1){$z'_5$};

\node[cvertex] (x1) at (13,2) {$S_1$}; \node at (12.8,1.6) {$\bullet$};
\node[xvertex] (x2) at (14,2) {$S_2$};
\node[cvertex] (x3) at (15,2) {$S_3$}; \node at (14.8,1.6) {$\bullet$};
\node[xvertex] (x4) at (16,2) {$S_4$};
\node[cvertex] (x5) at (17,2) {$S_5$}; \node at (16.8,1.6) {$\bullet$};

\node[vertex] (y11) at (11,4) {$u_{1,1}$};
\node[vertex] (y21) at (12,4) {$u_{2,1}$};
\node[vertex] (y31) at (13,4) {$u_{3,1}$};
\node[vertex] (y41) at (14,4) {$u_{4,1}$};
\node[vertex] (y51) at (15,4) {$u_{5,1}$};
\node[vertex] (y61) at (16,4) {$u_{6,1}$};
\node[vertex] (y71) at (17,4) {$u_{7,1}$};
\node[vertex] (y81) at (18,4) {$u_{8,1}$};
\node[vertex] (y91) at (19,4) {$u_{9,1}$};

\node[vertex] (y12) at (11,4.8) {};
\node[vertex] (y22) at (12,4.8) {};
\node[vertex] (y32) at (13,4.8) {};
\node[vertex] (y42) at (14,4.8) {};
\node[vertex] (y52) at (15,4.8) {};
\node[vertex] (y62) at (16,4.8) {};
\node[vertex] (y72) at (17,4.8) {};
\node[vertex] (y82) at (18,4.8) {};
\node[vertex] (y92) at (19,4.8) {};

\node[vertex] (y13) at (11,5.4) {};
\node[vertex] (y23) at (12,5.4) {};
\node[vertex] (y33) at (13,5.4) {};
\node[vertex] (y43) at (14,5.4) {};
\node[vertex] (y53) at (15,5.4) {};
\node[vertex] (y63) at (16,5.4) {};
\node[vertex] (y73) at (17,5.4) {};
\node[vertex] (y83) at (18,5.4) {};
\node[vertex] (y93) at (19,5.4) {};

\node[vertex] (y14) at (11,6.0) {};
\node[vertex] (y24) at (12,6.0) {};
\node[vertex] (y34) at (13,6.0) {};
\node[vertex] (y44) at (14,6.0) {};
\node[vertex] (y54) at (15,6.0) {};
\node[vertex] (y64) at (16,6.0) {};
\node[vertex] (y74) at (17,6.0) {};
\node[vertex] (y84) at (18,6.0) {};
\node[vertex] (y94) at (19,6.0) {};

\node[vertex] (y15) at (11,6.8) {$u_{1,5}$};
\node[vertex] (y25) at (12,6.8) {$u_{2,5}$};
\node[vertex] (y35) at (13,6.8) {$u_{3,5}$};
\node[vertex] (y45) at (14,6.8) {$u_{4,5}$};
\node[vertex] (y55) at (15,6.8) {$u_{5,5}$};
\node[vertex] (y65) at (16,6.8) {$u_{6,5}$};
\node[vertex] (y75) at (17,6.8) {$u_{7,5}$};
\node[vertex] (y85) at (18,6.8) {$u_{8,5}$};
\node[vertex] (y95) at (19,6.8) {$u_{9,5}$};

\path[-,thin]
(z0) edge (x1) edge (x2) edge (x3) edge (x4) edge (x5)
(z1) edge (z0) edge (z2) (z3) edge (z2) edge (z4) (z4) edge (z5)
(zl1)edge (z0) edge (zl2)(zl3)edge (zl2)edge (zl4)(zl4)edge (zl5);

\path[-,very thick]
(x1) edge (y11) edge (y21) edge (y31)
(x3) edge (y41) edge (y51) edge (y61)
(x5) edge (y71) edge (y81) edge (y91);

\path[-,very thin]
(x2) edge (y21) edge (y61) edge (y71)
(x4) edge (y31) edge (y51) edge (y71);

\path[-,thick]
(y11) edge (y12) (y12) edge (y13) (y13) edge (y14) (y14) edge (y15)
(y21) edge (y22) (y22) edge (y23) (y23) edge (y24) (y24) edge (y25)
(y31) edge (y32) (y32) edge (y33) (y33) edge (y34) (y34) edge (y35)
(y41) edge (y42) (y42) edge (y43) (y43) edge (y44) (y44) edge (y45)
(y51) edge (y52) (y52) edge (y53) (y53) edge (y54) (y54) edge (y55)
(y61) edge (y62) (y62) edge (y63) (y63) edge (y64) (y64) edge (y65)
(y71) edge (y72) (y72) edge (y73) (y73) edge (y74) (y74) edge (y75)
(y81) edge (y82) (y82) edge (y83) (y83) edge (y84) (y84) edge (y85)
(y91) edge (y92) (y92) edge (y93) (y93) edge (y94) (y94) edge (y95);

\node[]() at (21,0.0) {$s=3$};
\node[]() at (21,0.5) {$d=4$};
\node[]() at (21,3.0) {$p=7$};
\node[]() at (21,2.5) {$q=7$};
\node[]() at (21,2.0) {$K=5$};

\node at (26.05,0.25) {*};
\node at (24.7,-.25) {$\bullet$};
\node at (25.5,-.25) {$\bullet$};
\node[zvertex](z0) at (25.0, 0) {$z_0$};
\node[vertex] (z1) at (25.8, 0) {$z_1$};
\node[vertex] (z2) at (26.4, 0) {};
\node[vertex] (z3) at (26.9, 0) {};
\node[vertex] (z4) at (27.4, 0) {};
\node[vertex] (z5) at (27.9, 0) {};
\node[vertex] (z6) at (28.4, 0) {};
\node[vertex] (z7) at (29.0, 0) {$z_7$};
\node[vertex] (zl1) at (25.8,-1){$z'_1$};
\node[vertex] (zl2) at (26.4,-1){};
\node[vertex] (zl3) at (26.9,-1){};
\node[vertex] (zl4) at (27.4,-1){};
\node[vertex] (zl5) at (27.9,-1){};
\node[vertex] (zl6) at (28.4,-1){};
\node[vertex] (zl7) at (29.0,-1){$z'_7$};

\node[cvertex] (xx1) at (23,2) {$S_1$}; \node at (22.8,1.6) {$\bullet$};
\node[xvertex] (xx2) at (24,2) {$S_2$};
\node[cvertex] (xx3) at (25,2) {$S_3$}; \node at (24.8,1.6) {$\bullet$};
\node[xvertex] (xx4) at (26,2) {$S_4$};
\node[cvertex] (xx5) at (27,2) {$S_5$}; \node at (26.8,1.6) {$\bullet$};

\node[vertex] (yy11) at (21,4) {$u_{1,1}$};
\node[vertex] (yy21) at (22,4) {$u_{2,1}$};
\node[vertex] (yy31) at (23,4) {$u_{3,1}$};
\node[vertex] (yy41) at (24,4) {$u_{4,1}$};
\node[vertex] (yy51) at (25,4) {$u_{5,1}$};
\node[vertex] (yy61) at (26,4) {$u_{6,1}$};
\node[vertex] (yy71) at (27,4) {$u_{7,1}$};
\node[vertex] (yy81) at (28,4) {$u_{8,1}$};
\node[vertex] (yy91) at (29,4) {$u_{9,1}$};

\node[vertex] (yy12) at (21,4.8) {};
\node[vertex] (yy22) at (22,4.8) {};
\node[vertex] (yy32) at (23,4.8) {};
\node[vertex] (yy42) at (24,4.8) {};
\node[vertex] (yy52) at (25,4.8) {};
\node[vertex] (yy62) at (26,4.8) {};
\node[vertex] (yy72) at (27,4.8) {};
\node[vertex] (yy82) at (28,4.8) {};
\node[vertex] (yy92) at (29,4.8) {};

\node[vertex] (yy13) at (21,5.4) {};
\node[vertex] (yy23) at (22,5.4) {};
\node[vertex] (yy33) at (23,5.4) {};
\node[vertex] (yy43) at (24,5.4) {};
\node[vertex] (yy53) at (25,5.4) {};
\node[vertex] (yy63) at (26,5.4) {};
\node[vertex] (yy73) at (27,5.4) {};
\node[vertex] (yy83) at (28,5.4) {};
\node[vertex] (yy93) at (29,5.4) {};

\node[vertex] (yy14) at (21,6.0) {};
\node[vertex] (yy24) at (22,6.0) {};
\node[vertex] (yy34) at (23,6.0) {};
\node[vertex] (yy44) at (24,6.0) {};
\node[vertex] (yy54) at (25,6.0) {};
\node[vertex] (yy64) at (26,6.0) {};
\node[vertex] (yy74) at (27,6.0) {};
\node[vertex] (yy84) at (28,6.0) {};
\node[vertex] (yy94) at (29,6.0) {};

\node[vertex] (yy15) at (21,6.6) {};
\node[vertex] (yy25) at (22,6.6) {};
\node[vertex] (yy35) at (23,6.6) {};
\node[vertex] (yy45) at (24,6.6) {};
\node[vertex] (yy55) at (25,6.6) {};
\node[vertex] (yy65) at (26,6.6) {};
\node[vertex] (yy75) at (27,6.6) {};
\node[vertex] (yy85) at (28,6.6) {};
\node[vertex] (yy95) at (29,6.6) {};

\node[vertex] (yy16) at (21,7.2) {};
\node[vertex] (yy26) at (22,7.2) {};
\node[vertex] (yy36) at (23,7.2) {};
\node[vertex] (yy46) at (24,7.2) {};
\node[vertex] (yy56) at (25,7.2) {};
\node[vertex] (yy66) at (26,7.2) {};
\node[vertex] (yy76) at (27,7.2) {};
\node[vertex] (yy86) at (28,7.2) {};
\node[vertex] (yy96) at (29,7.2) {};

\node[vertex] (yy17) at (21,8) {$u_{1,7}$};
\node[vertex] (yy27) at (22,8) {$u_{2,7}$};
\node[vertex] (yy37) at (23,8) {$u_{3,7}$};
\node[vertex] (yy47) at (24,8) {$u_{4,7}$};
\node[vertex] (yy57) at (25,8) {$u_{5,7}$};
\node[vertex] (yy67) at (26,8) {$u_{6,7}$};
\node[vertex] (yy77) at (27,8) {$u_{7,7}$};
\node[vertex] (yy87) at (28,8) {$u_{8,7}$};
\node[vertex] (yy97) at (29,8) {$u_{9,7}$};

\path[-,thin]
(z0) edge (xx1) edge (xx2) edge (xx3) edge (xx4) edge (xx5)
(z1) edge (z0) edge (z2) (z3) edge (z2) edge (z4) (z5) edge (z4) edge (z6) (z6) edge (z7)
(zl1)edge (z0) edge (zl2)(zl3)edge (zl2)edge (zl4)(zl5)edge (zl4)edge (zl6)(zl6)edge (zl7);

\path[-,very thick]
(xx1) edge (yy11) edge (yy21) edge (yy31)
(xx3) edge (yy41) edge (yy51) edge (yy61)
(xx5) edge (yy71) edge (yy81) edge (yy91);

\path[-,very thin]
(xx2) edge (yy21) edge (yy61) edge (yy71)
(xx4) edge (yy31) edge (yy51) edge (yy71);

\path[-,thick]
(yy11) edge (yy12) (yy12) edge (yy13) (yy13) edge (yy14) (yy14) edge (yy15) (yy15) edge (yy16) (yy16) edge (yy17)
(yy21) edge (yy22) (yy22) edge (yy23) (yy23) edge (yy24) (yy24) edge (yy25) (yy25) edge (yy26) (yy26) edge (yy27)
(yy31) edge (yy32) (yy32) edge (yy33) (yy33) edge (yy34) (yy34) edge (yy35) (yy35) edge (yy36) (yy36) edge (yy37)
(yy41) edge (yy42) (yy42) edge (yy43) (yy43) edge (yy44) (yy44) edge (yy45) (yy45) edge (yy46) (yy46) edge (yy47)
(yy51) edge (yy52) (yy52) edge (yy53) (yy53) edge (yy54) (yy54) edge (yy55) (yy55) edge (yy56) (yy56) edge (yy57)
(yy61) edge (yy62) (yy62) edge (yy63) (yy63) edge (yy64) (yy64) edge (yy65) (yy65) edge (yy66) (yy66) edge (yy67)
(yy71) edge (yy72) (yy72) edge (yy73) (yy73) edge (yy74) (yy74) edge (yy75) (yy75) edge (yy76) (yy76) edge (yy77)
(yy81) edge (yy82) (yy82) edge (yy83) (yy83) edge (yy84) (yy84) edge (yy85) (yy85) edge (yy86) (yy86) edge (yy87)
(yy91) edge (yy92) (yy92) edge (yy93) (yy93) edge (yy94) (yy94) edge (yy95) (yy95) edge (yy96) (yy96) edge (yy97);

\end{tikzpicture}}

\caption{Case 2 ($s>2r+2$) when $s\leq d+1$ and consequently $r<d$. Reduction for Figure \ref{fig-reduction} Set Cover instance in Lemma~\ref{lem-npc2} proof.}\label{fig-reduction3c}
\end{figure}

\begin{lemma}\label{lem-npc3}
Given a graph $G$ and an integer $K$, deciding if $\gn_{s,d}(G)\leq K$ is NP-hard for every $s\geq 2$ and $d\geq 1$ satisfying Case 3 ($s=2r+2$, where $r=d\mod(s-1)$).
\end{lemma}

\begin{proof}
Reduction from \textsc{Set Cover}. Let $(\mathcal{S},c)$ be an instance of Set Cover. Recall Definition \ref{def_pq} and let $p=p(s,d)=d+\lceil\frac{d+1}{s-1}\rceil$, $q=q(s,d)=d+\lceil\frac{d}{s-1}\rceil$, $G=G_{s,d}(\mathcal{S},c)$ and $K=K_{s,d}(\mathcal{S},c)=c+2$.

First, suppose that there is no cover $\mathcal{C}$ of $U$ with at most $c$ sets in $\mathcal{S}$. We prove that the spy wins in one step against at most $c+2$ guards, starting in $z_r$. Indeed, both in case $s\leq d+1$ from Proposition \ref{lema_pq}(b) on the path $z'_2-z'_1-z_0-z_1-\ldots-z_{p-1}$ and in case $s>d+1$, we may assume w.l.g. that there is one guard in $z_0$. Moreover, since the spy can go in one step to $z'_{r+2}$ (recall that $s=2r+2$), we may assume w.l.g. that there is another guard in $z'_1$ (otherwise the spy wins both in case $s\leq d+1$ from Proposition \ref{lema_pq}(b) and in case $s>d+1$).
Since there are $c+2$ guards, then there is at most $c$ guards outside $\{z_0,z_1,z'_1,\ldots\}$. Since there is no cover of $U$ with $c$ sets in $\mathcal{S}$, then there exists some $1\leq i\leq n$ such that there is no guard in $N[U_i]$. Thus, the spy goes from $z_r$ to $u_{i,r+1}$ in one step (note that the distance is $r+(r+1)+1=2r+2=s$) and wins both in case $s\leq d+1$ from Proposition \ref{lema_pq}(a) in the path $z_0-S_{c(i)}-u_{i,1}-\ldots-u_{i,p}$ and in case $s>d+1$.

Now, suppose that there is a cover $\mathcal{C}=\{S_{j_1},\ldots,S_{j_c}\}$ of $U$ with $c$ sets in $\mathcal{S}$. We prove that $c+2$ guards win if they are placed first. The strategy of the guards is as follows. Occupy initially the vertices $z_0,z'_1,S_{j_1},\ldots,S_{j_c}$. Since $\mathcal{C}$ is a cover of $U$, we can define for any element $u_i\in U$ an index $c(i)$ such that $u_i\in S_{c(i)}\in\mathcal{C}$.
If the spy is in a vertex of $U_{i,\leq r}$ for some $i$, $Z_{\leq r}$ or $Z'_{\leq r+1}$, then the guards occupy the initial vertices, controlling the spy both in case $s\leq d+1$ as per Proposition \ref{lema_pq}(b) and in case $s>d+1$.
If the spy moved to a vertex of $U_{i,>r}$ from a vertex not in $U_{i,>r}$, then the guard in $S_{c(i)}$ goes to $u_{i,1}$ (controlling the spy in the path $u_{i,1}-\ldots-u_{i,p}$ both in case $s\leq d+1$ from Proposition \ref{lema_pq}(b) and in case $s>d+1$), the guard in $z_0$ goes to $S_{c(i)}$ and the guard in $z'_1$ goes to $z_0$.
If the spy goes from $u_{i,r+1}$ to $u_{j,r+1}$ with $c(i)=c(j)$ (recall that $s=2r+2$), then the guard in $S_{c(i)}$ goes to $u_{j,1}$ (controlling the spy) and the guard in $u_{i,1}$ goes to $S_{c(i)}$.
If the spy goes from $u_{i,r+1}$ to $u_{j,r+1}$ with $c(i)\ne c(j)$, then the guard in $S_{c(j)}$ goes to $u_{j,1}$ (controlling the spy), the guard in $z_0$ goes to $S_{c(j)}$, the guard in $S_{c(i)}$ goes to $z_0$ and the guard in $u_{i,1}$ goes to $S_{c(i)}$. 
Also notice that, if the spy is in a vertex of $U_{i,>r}$, the guard in $z_0$ controls the paths $z_0-\ldots-z_r$ and $z'_1-\ldots-z'_{r+2}$ both in case $s\leq d+1$ from Proposition \ref{lema_pq}(b) and in case $s>d+1$, since $s=2r+2$ and the spy cannot go from $u_{i,r+1}$ to $z'_{r+2}$.
If the spy goes from a vertex of $U_{i,>r}$ to a vertex of $U_{j,\leq r}$, $Z_{\leq r}$ or $Z'_{\leq r+1}$, then the guards occupy the initial configuration: the guard in $z_0$ goes to $z'_1$, the guard in $S_{c(i)}$ goes to $z_0$ and the guard in $u_{i,1}$ goes to $S_{c(i)}$.
With this strategy, the guards win the game.
\qed
\end{proof}

\begin{figure}[ht]\centering\scalebox{0.65}{

\begin{tikzpicture}[scale=0.9]
\tikzstyle{vertex}=[draw,circle,fill=white!25,minimum size=8pt,inner sep=1pt]
\tikzstyle{cvertex}=[draw,circle,fill=black!50,minimum size=10pt,inner sep=1pt]
\tikzstyle{xvertex}=[draw,circle,fill=gray!50,minimum size=10pt,inner sep=1pt]
\tikzstyle{zvertex}=[draw,circle,fill=white!50,minimum size=12pt,inner sep=1pt]

\node[]() at (1,0.0) {$s=2$};
\node[]() at (1,0.5) {$d=0$};
\node[]() at (1,3.0) {$p=1$};
\node[]() at (1,2.5) {$q=0$};
\node[]() at (1,2.0) {$K=5$};

\node at (5.35,0) {*};
\node at (4.7,-.25) {$\bullet$};
\node at (4.7,-.25) {$\bullet$};
\node at (6.7,-1.3) {$\bullet$};
\node[zvertex] (z0) at (5,0) {$z_0$};
\node[vertex] (zl1) at (7,-1){$z'_1$};
\node[vertex] (zl2) at (9,-1){$z'_2$};
\path[-,thin] (zl1) edge (z0) edge (zl2);

\node[cvertex] (x1) at (3,2) {$S_1$}; \node at (2.8,1.6) {$\bullet$};
\node[xvertex] (x2) at (4,2) {$S_2$};
\node[cvertex] (x3) at (5,2) {$S_3$}; \node at (4.8,1.6) {$\bullet$};
\node[xvertex] (x4) at (6,2) {$S_4$};
\node[cvertex] (x5) at (7,2) {$S_5$}; \node at (6.8,1.6) {$\bullet$};

\node[vertex] (y11) at (1,4) {$u_{1,1}$};
\node[vertex] (y21) at (2,4) {$u_{2,1}$};
\node[vertex] (y31) at (3,4) {$u_{3,1}$};
\node[vertex] (y41) at (4,4) {$u_{4,1}$};
\node[vertex] (y51) at (5,4) {$u_{5,1}$};
\node[vertex] (y61) at (6,4) {$u_{6,1}$};
\node[vertex] (y71) at (7,4) {$u_{7,1}$};
\node[vertex] (y81) at (8,4) {$u_{8,1}$};
\node[vertex] (y91) at (9,4) {$u_{9,1}$};

\path[-,thin]
(z0) edge (x1) edge (x2) edge (x3) edge (x4) edge (x5);

\path[-,very thick]
(x1) edge (y11) edge (y21) edge (y31)
(x3) edge (y41) edge (y51) edge (y61)
(x5) edge (y71) edge (y81) edge (y91);

\path[-,very thin]
(x2) edge (y21) edge (y61) edge (y71)
(x4) edge (y31) edge (y51) edge (y71);

\node[]() at (11,0.0) {$s=4$};
\node[]() at (11,0.5) {$d=1$};
\node[]() at (11,3.0) {$p=2$};
\node[]() at (11,2.5) {$q=1$};
\node[]() at (11,2.0) {$K=5$};

\node at (17,0.35) {*};
\node at (14.7,-.25) {$\bullet$};
\node at (16.7,-1.3) {$\bullet$};
\node[zvertex](z0) at (15,0) {$z_0$};
\node[vertex] (z1) at (17,0) {$z_1$};
\node[vertex] (zl1) at (17,-1){$z'_1$};
\node[vertex] (zl2) at (18,-1){$z'_2$};
\node[vertex] (zl3) at (19,-1){$z'_3$};
\path[-,thin] (zl1) edge (z0) edge (zl2) (zl2) edge (zl3);

\node[cvertex] (xx1) at (13,2) {$S_1$}; \node at (12.8,1.6) {$\bullet$};
\node[xvertex] (xx2) at (14,2) {$S_2$};
\node[cvertex] (xx3) at (15,2) {$S_3$}; \node at (14.8,1.6) {$\bullet$};
\node[xvertex] (xx4) at (16,2) {$S_4$};
\node[cvertex] (xx5) at (17,2) {$S_5$}; \node at (16.8,1.6) {$\bullet$};

\node[vertex] (yy11) at (11,4) {$u_{1,1}$};
\node[vertex] (yy21) at (12,4) {$u_{2,1}$};
\node[vertex] (yy31) at (13,4) {$u_{3,1}$};
\node[vertex] (yy41) at (14,4) {$u_{4,1}$};
\node[vertex] (yy51) at (15,4) {$u_{5,1}$};
\node[vertex] (yy61) at (16,4) {$u_{6,1}$};
\node[vertex] (yy71) at (17,4) {$u_{7,1}$};
\node[vertex] (yy81) at (18,4) {$u_{8,1}$};
\node[vertex] (yy91) at (19,4) {$u_{9,1}$};

\node[vertex] (yy12) at (11,5.4) {$u_{1,2}$};
\node[vertex] (yy22) at (12,5.4) {$u_{2,2}$};
\node[vertex] (yy32) at (13,5.4) {$u_{3,2}$};
\node[vertex] (yy42) at (14,5.4) {$u_{4,2}$};
\node[vertex] (yy52) at (15,5.4) {$u_{5,2}$};
\node[vertex] (yy62) at (16,5.4) {$u_{6,2}$};
\node[vertex] (yy72) at (17,5.4) {$u_{7,2}$};
\node[vertex] (yy82) at (18,5.4) {$u_{8,2}$};
\node[vertex] (yy92) at (19,5.4) {$u_{9,2}$};

\path[-,thin]
(z0) edge (xx1) edge (xx2) edge (xx3) edge (xx4) edge (xx5)
(z0) edge (z1);

\path[-,very thick]
(xx1) edge (yy11) edge (yy21) edge (yy31)
(xx3) edge (yy41) edge (yy51) edge (yy61)
(xx5) edge (yy71) edge (yy81) edge (yy91);

\path[-,very thin]
(xx2) edge (yy21) edge (yy61) edge (yy71)
(xx4) edge (yy31) edge (yy51) edge (yy71);

\path[-,thick]
(yy11) edge (yy12)
(yy21) edge (yy22)
(yy31) edge (yy32)
(yy41) edge (yy42)
(yy51) edge (yy52)
(yy61) edge (yy62)
(yy71) edge (yy72)
(yy81) edge (yy82)
(yy91) edge (yy92);

\node[]() at (21,0.0) {$s=6$};
\node[]() at (21,0.5) {$d=2$};
\node[]() at (21,3.0) {$p=3$};
\node[]() at (21,2.5) {$q=2$};
\node[]() at (21,2.0) {$K=5$};

\node at (28,0.35) {*};
\node at (24.7,-.25) {$\bullet$};
\node at (26.7,-1.3) {$\bullet$};
\node[zvertex] (z0) at (25, 0) {$z_0$};
\node[vertex] (zz1) at (27, 0) {$z_1$};
\node[vertex] (zz2) at (28, 0) {$z_2$};
\node[vertex] (zl1) at (27,-1){$z'_1$};
\node[vertex] (zl2) at (27.7,-1){};
\node[vertex] (zl3) at (28.3,-1){};
\node[vertex] (zl4) at (29,-1){$z'_4$};
\path[-,thin] (zl1) edge (z0) edge (zl2) (zl3) edge (zl2) edge (zl4);

\node[cvertex] (x1) at (23,2) {$S_1$}; \node at (22.8,1.6) {$\bullet$};
\node[xvertex] (x2) at (24,2) {$S_2$};
\node[cvertex] (x3) at (25,2) {$S_3$}; \node at (24.8,1.6) {$\bullet$};
\node[xvertex] (x4) at (26,2) {$S_4$};
\node[cvertex] (x5) at (27,2) {$S_5$}; \node at (26.8,1.6) {$\bullet$};

\node[vertex] (y11) at (21,4) {$u_{1,1}$};
\node[vertex] (y21) at (22,4) {$u_{2,1}$};
\node[vertex] (y31) at (23,4) {$u_{3,1}$};
\node[vertex] (y41) at (24,4) {$u_{4,1}$};
\node[vertex] (y51) at (25,4) {$u_{5,1}$};
\node[vertex] (y61) at (26,4) {$u_{6,1}$};
\node[vertex] (y71) at (27,4) {$u_{7,1}$};
\node[vertex] (y81) at (28,4) {$u_{8,1}$};
\node[vertex] (y91) at (29,4) {$u_{9,1}$};

\node[vertex] (y12) at (21,4.7) {};
\node[vertex] (y22) at (22,4.7) {};
\node[vertex] (y32) at (23,4.7) {};
\node[vertex] (y42) at (24,4.7) {};
\node[vertex] (y52) at (25,4.7) {};
\node[vertex] (y62) at (26,4.7) {};
\node[vertex] (y72) at (27,4.7) {};
\node[vertex] (y82) at (28,4.7) {};
\node[vertex] (y92) at (29,4.7) {};

\node[vertex] (y13) at (21,5.4) {$u_{1,3}$};
\node[vertex] (y23) at (22,5.4) {$u_{2,3}$};
\node[vertex] (y33) at (23,5.4) {$u_{3,3}$};
\node[vertex] (y43) at (24,5.4) {$u_{4,3}$};
\node[vertex] (y53) at (25,5.4) {$u_{5,3}$};
\node[vertex] (y63) at (26,5.4) {$u_{6,3}$};
\node[vertex] (y73) at (27,5.4) {$u_{7,3}$};
\node[vertex] (y83) at (28,5.4) {$u_{8,3}$};
\node[vertex] (y93) at (29,5.4) {$u_{9,3}$};

\path[-,thin]
(z0) edge (x1) edge (x2) edge (x3) edge (x4) edge (x5)
(zz1) edge (z0) edge (zz2);

\path[-,very thick]
(x1) edge (y11) edge (y21) edge (y31)
(x3) edge (y41) edge (y51) edge (y61)
(x5) edge (y71) edge (y81) edge (y91);

\path[-,very thin]
(x2) edge (y21) edge (y61) edge (y71)
(x4) edge (y31) edge (y51) edge (y71);

\path[-,thick]
(y11) edge (y12) (y12) edge (y13)
(y21) edge (y22) (y22) edge (y23)
(y31) edge (y32) (y32) edge (y33)
(y41) edge (y42) (y42) edge (y43)
(y51) edge (y52) (y52) edge (y53)
(y61) edge (y62) (y62) edge (y63)
(y71) edge (y72) (y72) edge (y73)
(y81) edge (y82) (y82) edge (y83)
(y91) edge (y92) (y92) edge (y93);
\end{tikzpicture}}

\caption{Case 3 ($s=2r+2$) when $s=2d+2$ and consequently $r=d$. Reduction for Figure \ref{fig-reduction} Set Cover instance in Lemma \ref{lem-npc3} proof.}\label{fig-reduction2b}
\end{figure}

\begin{figure}[ht]\centering\scalebox{0.65}{

\begin{tikzpicture}[scale=0.9]
\tikzstyle{vertex}=[draw,circle,fill=white!25,minimum size=8pt,inner sep=1pt]
\tikzstyle{cvertex}=[draw,circle,fill=black!50,minimum size=10pt,inner sep=1pt]
\tikzstyle{xvertex}=[draw,circle,fill=gray!50,minimum size=10pt,inner sep=1pt]
\tikzstyle{zvertex}=[draw,circle,fill=white!50,minimum size=12pt,inner sep=1pt]

\node[]() at (1,0.0) {$s=2$};
\node[]() at (1,0.5) {$d=1$};
\node[]() at (1,3.0) {$p=3$};
\node[]() at (1,2.5) {$q=2$};
\node[]() at (1,2.0) {$K=5$};

\node at (5.35,0.11) {*};
\node at (4.7,-.25) {$\bullet$};
\node at (6.7,-1.3) {$\bullet$};
\node[zvertex] (z0) at (5, 0) {$z_0$};
\node[vertex] (z1)  at (7, 0) {$z_1$};
\node[vertex] (z2)  at (8, 0) {$z_2$};
\node[vertex] (zl1) at (7,-1) {$z'_1$};
\node[vertex] (zl2) at (7.7,-1) {};
\node[vertex] (zl3) at (8.3,-1) {};
\node[vertex] (zl4) at (9,-1) {$z'_4$};
\path[-,thin] (zl1) edge (z0) edge (zl2) (zl3) edge (zl2) edge (zl4);

\node[cvertex] (x1) at (3,2) {$S_1$}; \node at (2.8,1.6) {$\bullet$};
\node[xvertex] (x2) at (4,2) {$S_2$};
\node[cvertex] (x3) at (5,2) {$S_3$}; \node at (4.8,1.6) {$\bullet$};
\node[xvertex] (x4) at (6,2) {$S_4$};
\node[cvertex] (x5) at (7,2) {$S_5$}; \node at (6.8,1.6) {$\bullet$};

\node[vertex] (y11) at (1,4) {$u_{1,1}$};
\node[vertex] (y21) at (2,4) {$u_{2,1}$};
\node[vertex] (y31) at (3,4) {$u_{3,1}$};
\node[vertex] (y41) at (4,4) {$u_{4,1}$};
\node[vertex] (y51) at (5,4) {$u_{5,1}$};
\node[vertex] (y61) at (6,4) {$u_{6,1}$};
\node[vertex] (y71) at (7,4) {$u_{7,1}$};
\node[vertex] (y81) at (8,4) {$u_{8,1}$};
\node[vertex] (y91) at (9,4) {$u_{9,1}$};

\node[vertex] (y12) at (1,4.7) {};
\node[vertex] (y22) at (2,4.7) {};
\node[vertex] (y32) at (3,4.7) {};
\node[vertex] (y42) at (4,4.7) {};
\node[vertex] (y52) at (5,4.7) {};
\node[vertex] (y62) at (6,4.7) {};
\node[vertex] (y72) at (7,4.7) {};
\node[vertex] (y82) at (8,4.7) {};
\node[vertex] (y92) at (9,4.7) {};

\node[vertex] (y13) at (1,5.4) {$u_{1,3}$};
\node[vertex] (y23) at (2,5.4) {$u_{2,3}$};
\node[vertex] (y33) at (3,5.4) {$u_{3,3}$};
\node[vertex] (y43) at (4,5.4) {$u_{4,3}$};
\node[vertex] (y53) at (5,5.4) {$u_{5,3}$};
\node[vertex] (y63) at (6,5.4) {$u_{6,3}$};
\node[vertex] (y73) at (7,5.4) {$u_{7,3}$};
\node[vertex] (y83) at (8,5.4) {$u_{8,3}$};
\node[vertex] (y93) at (9,5.4) {$u_{9,3}$};

\path[-,thin]
(z0) edge (x1) edge (x2) edge (x3) edge (x4) edge (x5)
(z1) edge (z0) edge (z2);

\path[-,very thick]
(x1) edge (y11) edge (y21) edge (y31)
(x3) edge (y41) edge (y51) edge (y61)
(x5) edge (y71) edge (y81) edge (y91);

\path[-,very thin]
(x2) edge (y21) edge (y61) edge (y71)
(x4) edge (y31) edge (y51) edge (y71);

\path[-,thick]
(y11) edge (y12) (y12) edge (y13)
(y21) edge (y22) (y22) edge (y23)
(y31) edge (y32) (y32) edge (y33)
(y41) edge (y42) (y42) edge (y43)
(y51) edge (y52) (y52) edge (y53)
(y61) edge (y62) (y62) edge (y63)
(y71) edge (y72) (y72) edge (y73)
(y81) edge (y82) (y82) edge (y83)
(y91) edge (y92) (y92) edge (y93);

\node[]() at (11,0.0) {$s=2$};
\node[]() at (11,0.5) {$d=2$};
\node[]() at (11,3.0) {$p=5$};
\node[]() at (11,2.5) {$q=4$};
\node[]() at (11,2.0) {$K=5$};

\node at (15.35,0.11) {*};
\node at (14.7,-.25) {$\bullet$};
\node at (16,-1.3) {$\bullet$};
\node[zvertex](z0) at (15,0) {$z_0$};
\node[vertex] (z1) at (16.3, 0) {$z_1$};
\node[vertex] (z2) at (16.9, 0) {};
\node[vertex] (z3) at (17.4, 0) {};
\node[vertex] (z4) at (18.0, 0) {$z_4$};
\node[vertex] (zl1) at (16.3,-1) {$z'_1$};
\node[vertex] (zl2) at (16.9,-1) {};
\node[vertex] (zl3) at (17.4,-1) {};
\node[vertex] (zl4) at (17.9,-1) {};
\node[vertex] (zl5) at (18.4,-1) {};
\node[vertex] (zl6) at (19,-1) {$z'_6$};
\path[-,thin] (zl1) edge (z0) edge (zl2) (zl3) edge (zl2) edge (zl4) (zl5) edge (zl4) edge (zl6);

\node[cvertex] (x1) at (13,2) {$S_1$}; \node at (12.8,1.6) {$\bullet$};
\node[xvertex] (x2) at (14,2) {$S_2$};
\node[cvertex] (x3) at (15,2) {$S_3$}; \node at (14.8,1.6) {$\bullet$};
\node[xvertex] (x4) at (16,2) {$S_4$};
\node[cvertex] (x5) at (17,2) {$S_5$}; \node at (16.8,1.6) {$\bullet$};

\node[vertex] (y11) at (11,4) {$u_{1,1}$};
\node[vertex] (y21) at (12,4) {$u_{2,1}$};
\node[vertex] (y31) at (13,4) {$u_{3,1}$};
\node[vertex] (y41) at (14,4) {$u_{4,1}$};
\node[vertex] (y51) at (15,4) {$u_{5,1}$};
\node[vertex] (y61) at (16,4) {$u_{6,1}$};
\node[vertex] (y71) at (17,4) {$u_{7,1}$};
\node[vertex] (y81) at (18,4) {$u_{8,1}$};
\node[vertex] (y91) at (19,4) {$u_{9,1}$};

\node[vertex] (y12) at (11,4.8) {};
\node[vertex] (y22) at (12,4.8) {};
\node[vertex] (y32) at (13,4.8) {};
\node[vertex] (y42) at (14,4.8) {};
\node[vertex] (y52) at (15,4.8) {};
\node[vertex] (y62) at (16,4.8) {};
\node[vertex] (y72) at (17,4.8) {};
\node[vertex] (y82) at (18,4.8) {};
\node[vertex] (y92) at (19,4.8) {};

\node[vertex] (y13) at (11,5.4) {};
\node[vertex] (y23) at (12,5.4) {};
\node[vertex] (y33) at (13,5.4) {};
\node[vertex] (y43) at (14,5.4) {};
\node[vertex] (y53) at (15,5.4) {};
\node[vertex] (y63) at (16,5.4) {};
\node[vertex] (y73) at (17,5.4) {};
\node[vertex] (y83) at (18,5.4) {};
\node[vertex] (y93) at (19,5.4) {};

\node[vertex] (y14) at (11,6.0) {};
\node[vertex] (y24) at (12,6.0) {};
\node[vertex] (y34) at (13,6.0) {};
\node[vertex] (y44) at (14,6.0) {};
\node[vertex] (y54) at (15,6.0) {};
\node[vertex] (y64) at (16,6.0) {};
\node[vertex] (y74) at (17,6.0) {};
\node[vertex] (y84) at (18,6.0) {};
\node[vertex] (y94) at (19,6.0) {};

\node[vertex] (y15) at (11,6.8) {$u_{1,5}$};
\node[vertex] (y25) at (12,6.8) {$u_{2,5}$};
\node[vertex] (y35) at (13,6.8) {$u_{3,5}$};
\node[vertex] (y45) at (14,6.8) {$u_{4,5}$};
\node[vertex] (y55) at (15,6.8) {$u_{5,5}$};
\node[vertex] (y65) at (16,6.8) {$u_{6,5}$};
\node[vertex] (y75) at (17,6.8) {$u_{7,5}$};
\node[vertex] (y85) at (18,6.8) {$u_{8,5}$};
\node[vertex] (y95) at (19,6.8) {$u_{9,5}$};

\path[-,thin]
(z0) edge (x1) edge (x2) edge (x3) edge (x4) edge (x5)
(z1) edge (z0) edge (z2) (z3) edge (z2) edge (z4);

\path[-,very thick]
(x1) edge (y11) edge (y21) edge (y31)
(x3) edge (y41) edge (y51) edge (y61)
(x5) edge (y71) edge (y81) edge (y91);

\path[-,very thin]
(x2) edge (y21) edge (y61) edge (y71)
(x4) edge (y31) edge (y51) edge (y71);

\path[-,thick]
(y11) edge (y12) (y12) edge (y13) (y13) edge (y14) (y14) edge (y15)
(y21) edge (y22) (y22) edge (y23) (y23) edge (y24) (y24) edge (y25)
(y31) edge (y32) (y32) edge (y33) (y33) edge (y34) (y34) edge (y35)
(y41) edge (y42) (y42) edge (y43) (y43) edge (y44) (y44) edge (y45)
(y51) edge (y52) (y52) edge (y53) (y53) edge (y54) (y54) edge (y55)
(y61) edge (y62) (y62) edge (y63) (y63) edge (y64) (y64) edge (y65)
(y71) edge (y72) (y72) edge (y73) (y73) edge (y74) (y74) edge (y75)
(y81) edge (y82) (y82) edge (y83) (y83) edge (y84) (y84) edge (y85)
(y91) edge (y92) (y92) edge (y93) (y93) edge (y94) (y94) edge (y95);

\node[]() at (21,0.0) {$s=4$};
\node[]() at (21,0.5) {$d=4$};
\node[]() at (21,3.0) {$p=6$};
\node[]() at (21,2.5) {$q=5$};
\node[]() at (21,2.0) {$K=5$};

\node at (26.05,0.25) {*};
\node at (24.7,-.25) {$\bullet$};
\node at (25.5,-1.3) {$\bullet$};
\node[zvertex](z0) at (25,0) {$z_0$};
\node[vertex] (z1) at (25.8,0){$z_1$};
\node[vertex] (z2) at (26.4,0){};
\node[vertex] (z3) at (26.9,0){};
\node[vertex] (z4) at (27.4,0){};
\node[vertex] (z5) at (28.0,0){$z_5$};
\node[vertex] (zl1) at (25.8,-1){$z'_1$};
\node[vertex] (zl2) at (26.4,-1){};
\node[vertex] (zl3) at (26.9,-1){};
\node[vertex] (zl4) at (27.4,-1){};
\node[vertex] (zl5) at (27.9,-1){};
\node[vertex] (zl6) at (28.4,-1){};
\node[vertex] (zl7) at (29.0,-1){$z'_7$};
\path[-,thin] (zl1) edge (z0) edge (zl2) (zl3) edge (zl2) edge (zl4) (zl5) edge (zl4) edge (zl6) (zl7) edge (zl6);

\node[cvertex] (xx1) at (23,2) {$S_1$}; \node at (22.8,1.6) {$\bullet$};
\node[xvertex] (xx2) at (24,2) {$S_2$};
\node[cvertex] (xx3) at (25,2) {$S_3$}; \node at (24.8,1.6) {$\bullet$};
\node[xvertex] (xx4) at (26,2) {$S_4$};
\node[cvertex] (xx5) at (27,2) {$S_5$}; \node at (26.8,1.6) {$\bullet$};

\node[vertex] (yy11) at (21,4) {$u_{1,1}$};
\node[vertex] (yy21) at (22,4) {$u_{2,1}$};
\node[vertex] (yy31) at (23,4) {$u_{3,1}$};
\node[vertex] (yy41) at (24,4) {$u_{4,1}$};
\node[vertex] (yy51) at (25,4) {$u_{5,1}$};
\node[vertex] (yy61) at (26,4) {$u_{6,1}$};
\node[vertex] (yy71) at (27,4) {$u_{7,1}$};
\node[vertex] (yy81) at (28,4) {$u_{8,1}$};
\node[vertex] (yy91) at (29,4) {$u_{9,1}$};

\node[vertex] (yy12) at (21,4.8) {};
\node[vertex] (yy22) at (22,4.8) {};
\node[vertex] (yy32) at (23,4.8) {};
\node[vertex] (yy42) at (24,4.8) {};
\node[vertex] (yy52) at (25,4.8) {};
\node[vertex] (yy62) at (26,4.8) {};
\node[vertex] (yy72) at (27,4.8) {};
\node[vertex] (yy82) at (28,4.8) {};
\node[vertex] (yy92) at (29,4.8) {};

\node[vertex] (yy13) at (21,5.4) {};
\node[vertex] (yy23) at (22,5.4) {};
\node[vertex] (yy33) at (23,5.4) {};
\node[vertex] (yy43) at (24,5.4) {};
\node[vertex] (yy53) at (25,5.4) {};
\node[vertex] (yy63) at (26,5.4) {};
\node[vertex] (yy73) at (27,5.4) {};
\node[vertex] (yy83) at (28,5.4) {};
\node[vertex] (yy93) at (29,5.4) {};

\node[vertex] (yy14) at (21,6.0) {};
\node[vertex] (yy24) at (22,6.0) {};
\node[vertex] (yy34) at (23,6.0) {};
\node[vertex] (yy44) at (24,6.0) {};
\node[vertex] (yy54) at (25,6.0) {};
\node[vertex] (yy64) at (26,6.0) {};
\node[vertex] (yy74) at (27,6.0) {};
\node[vertex] (yy84) at (28,6.0) {};
\node[vertex] (yy94) at (29,6.0) {};

\node[vertex] (yy15) at (21,6.6) {};
\node[vertex] (yy25) at (22,6.6) {};
\node[vertex] (yy35) at (23,6.6) {};
\node[vertex] (yy45) at (24,6.6) {};
\node[vertex] (yy55) at (25,6.6) {};
\node[vertex] (yy65) at (26,6.6) {};
\node[vertex] (yy75) at (27,6.6) {};
\node[vertex] (yy85) at (28,6.6) {};
\node[vertex] (yy95) at (29,6.6) {};

\node[vertex] (yy17) at (21,7.4) {$u_{1,6}$};
\node[vertex] (yy27) at (22,7.4) {$u_{2,6}$};
\node[vertex] (yy37) at (23,7.4) {$u_{3,6}$};
\node[vertex] (yy47) at (24,7.4) {$u_{4,6}$};
\node[vertex] (yy57) at (25,7.4) {$u_{5,6}$};
\node[vertex] (yy67) at (26,7.4) {$u_{6,6}$};
\node[vertex] (yy77) at (27,7.4) {$u_{7,6}$};
\node[vertex] (yy87) at (28,7.4) {$u_{8,6}$};
\node[vertex] (yy97) at (29,7.4) {$u_{9,6}$};

\path[-,thin]
(z0) edge (xx1) edge (xx2) edge (xx3) edge (xx4) edge (xx5)
(z1) edge (z0) edge (z2) (z3) edge (z2) edge (z4) (z5) edge (z4);

\path[-,very thick]
(xx1) edge (yy11) edge (yy21) edge (yy31)
(xx3) edge (yy41) edge (yy51) edge (yy61)
(xx5) edge (yy71) edge (yy81) edge (yy91);

\path[-,very thin]
(xx2) edge (yy21) edge (yy61) edge (yy71)
(xx4) edge (yy31) edge (yy51) edge (yy71);

\path[-,thick]
(yy11) edge (yy12) (yy12) edge (yy13) (yy13) edge (yy14) (yy14) edge (yy15) (yy15) edge (yy16)
(yy21) edge (yy22) (yy22) edge (yy23) (yy23) edge (yy24) (yy24) edge (yy25) (yy25) edge (yy26)
(yy31) edge (yy32) (yy32) edge (yy33) (yy33) edge (yy34) (yy34) edge (yy35) (yy35) edge (yy36)
(yy41) edge (yy42) (yy42) edge (yy43) (yy43) edge (yy44) (yy44) edge (yy45) (yy45) edge (yy46)
(yy51) edge (yy52) (yy52) edge (yy53) (yy53) edge (yy54) (yy54) edge (yy55) (yy55) edge (yy56)
(yy61) edge (yy62) (yy62) edge (yy63) (yy63) edge (yy64) (yy64) edge (yy65) (yy65) edge (yy66)
(yy71) edge (yy72) (yy72) edge (yy73) (yy73) edge (yy74) (yy74) edge (yy75) (yy75) edge (yy76)
(yy81) edge (yy82) (yy82) edge (yy83) (yy83) edge (yy84) (yy84) edge (yy85) (yy85) edge (yy86)
(yy91) edge (yy92) (yy92) edge (yy93) (yy93) edge (yy94) (yy94) edge (yy95) (yy95) edge (yy96);
\end{tikzpicture}}

\caption{Case 3 ($s=2r+2$) when $s\leq d+1$ and consequently $r<d$. Reduction for Figure \ref{fig-reduction} Set Cover instance in Lemma~\ref{lem-npc3} proof.}\label{fig-reduction3b}
\end{figure}

\bibliographystyle{splncs04}

\end{document}